\begin{document}

\title{Predictive distributions that mimic frequencies \\
over a restricted subdomain
}
\subtitle{Expanded preprint version}


\author{Frank Lad        \and
        Giuseppe Sanfilippo 
}


\institute{Frank Lad \at
              Department of Mathematics and Statistics  \\
              University of Canterbury, Christchurch, New Zealand \\
              \email{F.Lad@math.canterbury.ac.nz}           
           \and
           Giuseppe Sanfilippo \at
             Department of Mathematics and Computer Science\\
             Via Archirafi 34, 90123 Palermo, Italy\\
             University of Palermo, Italy
             \email{giuseppe.sanfilippo@unipa.it}
}

\date{Received: date / Accepted: date}

\maketitle

\begin{abstract}
\noindent A predictive distribution over a sequence of $N+1$ events
is said to be ``frequency mimicking" whenever the
probability for the final event conditioned on the outcome of the
first $N$ events equals the relative frequency of successes among
them. Infinitely extendible exchangeable distributions that {\it
universally} inhere this property 
are known to have several annoying concomitant properties. We
motivate frequency mimicking assertions {\it
over a limited subdomain} in practical problems of
finite inference, and we identify their computable coherent implications. 
We provide some computed examples using
reference distributions, and we introduce computational software to generate
any specification. The software derives from an inversion of the
finite form of the exchangeability representation theorem.  Three new theorems 
delineate the extent of the usefulness of such
distributions, and we show why it may {\it not} be appropriate to
extend the frequency mimicking assertions for a specified value of $N$ 
to any arbitrary larger size of $N$. The constructive results identify 
the source and structure of ``adherent masses'' in the limit of a sequence of
finitely additive distributions.  Appendices develop a novel
geometrical representation of conditional probabilities which
illuminate the analysis.

\keywords{probability elicitation \and
conditional prevision \and
probability bounds\and 
finitely additive distributions\and
adherent mass\and
$A_n$ and $H_n$ distributions
}

\end{abstract}

\section{\ Introduction}
\label{sect:intro}
The subjectivist understanding of probability accepts naturally the
notion that conditional probabilities can be asserted as prior
information for the analysis of problems involving uncertainty.
Bounds on associated unconditional probabilities can be derived from
these, using the principle of coherence governing all assertions.
The theoretical basis for this approach lies in de Finetti's
construction of conditional probability as a price for a contingent
transaction, and his fundamental theorem of prevision (FTP). See \citet*{Lad90,Lad92} and \cite[2.10, 3.3]{Lad96}.
Applications of interest can be found in the articles of 
\citet*[Section VI]{Johnson05} and of \citet*{Capotorti07}.
The former uses expectations regarding successful
commercial rocket launches under a variety of conditions to assert
``prior'' knowledge via conditional probabilities.  The latter
orders conditional probabilities based on natural attitudes toward
``median medical diagnoses'' derived from several examining radiologists who are blinded to the assessments of one another.  These
orderings are used along with other forms of partial knowledge to
compute bounds on accuracy rates of median diagnoses, applying
an extension of the FTP to quadratic conditions.  It has also been extended to include situations where the conditioning event may be assessed with probability zero.  See \citet*{Biazzo00,Capotorti03,Coletti96,Gilio16,Regazzini87}.

The present article investigates the use of conditional
probabilities as assertions of prior information in a context that
has a broad range of applicability, with particular reference to
probability elicitation. Objectivist statistical methods have long
promoted the sample frequency of occurrences as an appropriate
``estimate of the probability of an event'' in a
string of observations construed as independent Bernoulli. In
contrast, Bayesians typically call for adjustments to the sample
frequency as their ``posterior predictive probabilities for the next
event'', based on prior information in sampling setups they regard
exchangeably. Foundational differences aside however, there is
coherent support for more agreement in standard practice than the wide variation permitted by formal theoretical
comparisons. While pleased with their sample mean estimate ``when it
sounds reasonable'', frequentists tend to ``doubt the data'' when a
seemingly unusual sample result occurs, often wishing to repeat the
sampling experiment while informally hedging their bets.
Alternately, when a Bayesian hears of a sample statistic regarding a
matter one has not been thinking about, it is not uncommon in
practice to locate an expectation for the next observation at the
announced sample frequency.  Only when an announced frequency sounds
seriously out of bounds might one bother to adjust it when asserting
a predictive probability. It is the formalisation and assessment 
of this shared coherent practice that we address in this article.

Our results can be appreciated in the tradition of Savage, who once
wrote an unpublished exposition entitled ``The subjective basis of statistical
practice'' \citep{Savage61}.
 Despite his pathbreaking investigations
and his collaboration with de Finetti, his deferential attitude to
aspects of the statistical practice established at his time has been
notable 
(\citealp[p4]{Savage54};
 \citealp[preface]{Savage72}). 
 The results we present
here give some support to his conviction that informal applied
activities of common practice do have a coherent foundation. 

Discussion of a motivating example can be simplified if we first
introduce a notation for probabilities, conditional probabilities,
and vectors of them.  We follow 
de Finetti's (\citeyear{deFinetti67,deFi70}) 
 convention
of defining events as numbers rather than sets. Standard
set-theoretic formalists may read our allusions to ``events''  as
``indicators of events'' (random variables) without distortion.

\noindent {\bf Notation:}     Consider a sequence of $N+1$
events $E_1,E_2,\ldots,E_{N+1}$ that are regarded exchangeably.  For each $K=1,\ldots,N+1$, let $S_K$ denote the sum of the first $K$ of them, and $\bar{S}_K$ their average: $S_K=\sum_{i=1}^KE_{i}$,  and $\bar{S}_K=\frac{S_K}{K}$. Let
$q_{a,N+1}$ denote a probability for the sum, $P(S_{N+1} = a)$ for
$a = 0, 1, ..., N+1$; \ and let $p_{a,N}$ denote the conditional
probability $P(E_{N+1}|S_N = a)$ \; for $a = 0, 1, 2, ..., N$. Bold
letters denote vectors of these variables, as in the probability
mass function (pmf) vector {\bf q}$_{N+2} = (q_{0,N+1}, q_{1,N+1}, ...,
q_{N+1,N+1})$ and the conditional probability function vector {\bf
p}$_{N+1} = (p_{0,N}, p_{1,N}, ..., p_{N,N})$. The subscript on a
bold letter denotes the dimension of the
vector.  

In these terms, a conditional probability is said to mimic a
frequency whenever $p_{a,N} = a/N$. It was suggested in a
preliminary investigation by \citet{Lad96b} 
  that assertions of $p_{a,N}
= a/N$ {\it over a limited domain} of ``$a$'' values and a specific
size of $N$ may well represent forecasters' attitudes towards
sequences of
experimental observations.  

\noindent {\bf Example:}  Consider the events observed in a sample
of beehives from a large and productive apiary with some thousands
of hives. Define the event $E_i$ as the indicator that hive $i$ is
found to have at least one swarm cell (a new queen cell) in the brood box on an
inspection day, the inspection day being 1 full month into a lush
Spring.  The apiarist is
uncertain about whether any particular hive will be observed to have
new queen cells formed within its brood box, and regards such
events exchangeably over the entire apiary. Wishing to assess a probability distribution for the total number of hives containing at least one swarm cell, it is felt that a sample of
$100$ observations would contain enough information to locate a predictive 
probability for the next hive if the observed frequency in the sample were between
$.25$ and $.60$. Although this experienced beekeeper definitely has
informed opinions regarding bee-swarming behavior this Spring, there
is considered not to be enough prior information to motivate
adjusting the sample frequency as the predictive probability if the 
mean observation were found to be within this acceptable range. However, were
the sample frequency among 100 hives to be observed outside this interval, the
apiarist would want to adjust it toward the interval when asserting
a predictive probability. Furthermore, the predictive probability
may be specified never to fall below a lower bound such as $.10$ nor
above an upper bound such as $.70$ even if the observed sum of hives
with
queen cells were found to be as low as $0$ or as high as $100$. 

\noindent {\bf Assertion structure:}  In this context, the apiarist
may express prior knowledge in the
form of three types of assertions:
\\
\indent i.)  $p_{a,100} = a/100$ for integers ``$a$'' within the
interval
$[25, 60]$; \\
\indent  ii.)  $p_{a,100} \leq p_{a+1, 100}$ for integers ``$a$''
within the
intervals $[0, 24]$ and $[60, 99]$; \ \ \ and \\
\indent iii.) $p_{0,100} \geq .10$, and $p_{100, 100} \leq .70$. 

These assertions represent prior knowledge that places a sharply
defined interval, $[.25, .60]$, over reasonable values for
predictive probabilities based on observed frequencies of success
among preceding events in the sequence.  They do {\it not}
constitute {\it certainty} that the frequency {\it does lie} within
this interval. They only specify that prior information is not
refined enough to motivate adjusting a prospective frequency among
100 hives for inference about the 101$^{st}$ hive if the frequency
lies within this ``reasonable sounding'' interval. The predictive
probability would be adjusted away from the sample mean only if the
mean were to be observed outside this interval. Assertions in this
form are specific enough to imply computable bounds on associated 
probability mass functions for sums of any number of
events, such as {\bf q}$_{102}$, {\bf
q}$_{1002}$, or even {\bf q}$_{100002}$.  Similarly, they motivate 
bounds on conditional probabilities based on observed sequences of 
any other length. 

The formal assertions of the apiarist (i, ii and iii) amount to a
specification of a family of distributions we shall call {\it
frequency mimicking distributions}, or FMD's.  The assertions are not
sufficient to specify a unique complete distribution, since they
specify only $36$ linear equalities along with 65 inequalities among
the components of {\bf q}$_{102}$, which are restricted otherwise
only to sum to $1$. Thus, the family of cohering FMD's consists of a
$65$-dimensional polytope within the unit-simplex ${\bf S}^{101}$.
(This dimension derives from the dimension of {\bf q}$_{102}$ less
$37$ for the number of linear restrictions involved,  36 from the
frequency mimic assertions of type i, plus 1 from the unit-sum
restriction.)

There are two ways to complete the bases for
understanding this family of distributions: by a direct
characterisation of the polytope via de Finetti's fundamental
theorem of prevision, and then exploring the space of complete
cohering distributions using robust programming software such as the GAMS package \citep{Brooke}
or by examining an array of interpretable
reference distributions within the class of all agreeing FMD's. The
graphical results we display in this
article follow the latter tack. 

This article systematically examines the coherent implications of
assertions such as those of the apiarist for prior distributions
over the proportion of successes to be observed in the total population.  For reasons we shall see, the practical
relevance of the results are to finite population problems as opposed to sequential experimentation to an unspecified extent. In Section
\ref{sect:prel} we describe two algebraic features of the computational
algorithms used to generate the particular cases reported in this
article; and we review one background result pertinent to the uniform distribution which is not
widely known. A general nonparametric specification of FMDs is
presented in Section \ref{sect:fmd}, and an interactive file of subroutines to
display them is made publicly available. This allows the reader to
examine any cases of interest whatever, and to use the results for
application. Some interesting numerical examples are displayed and
discussed in Section \ref{sect:numericalexamples}. These results continue and extend the
specific numerical assessments portrayed in our introductory apiary
example. Section \ref{sect:issue} addresses structural issues surrounding the
extendibility of FMDs. It presents three new theorems that
together specify rather precisely the applicable relevance of
frequency mimicking distributions.  Proofs are presented in sparse
algebraic form in the text, along with computed examples. Appendices
are used to develop more extensive detail as well as to portray a helpful
geometrical exposition. Implications of the analysis extend to the limits of infinitely extendible but only
finitely additive distributions. These limiting distributions are
improper and feature ``adherent'' or ``agglutinated'' masses. This
is a technical topic that amused
\cite{deFinetti1949,deFinetti55}
immensely, and is relevant to his leading work on applications of
conditional probabilities when the conditioning event is assessed
with probability zero.  See \cite{Cifarelli96}.
We shall merely report and discuss these results in the text here, and make available  ``supplementary materials'' in Appendix 4 to provide algebraic detail.
 To refresh the reader with this feature that will be
relevant to the constructions in
the article, we present the following definition. 
\begin{definition}\label{def:first}
A finitely additive probability
distribution for $X $ over $[0,1]$ is said to have adherent (or
agglutinated) masses of size $p$ and $1-p$ at $1 $ and $0$,
respectively, if $P(X = x)  = 0$ for any $x$, yet for {\it any} numbers $a$
and $b$ for which $\; 0 < a < b < 1, \ P[X \in [a,b]] = 0, \ $
while $\ P[X \in (b, 1)] = p \ $
and $P[X \in (0, a)] = 1-p$. \ \ $\S$  
\end{definition}	
Properties of distributions with adherent masses may appear unusual,
because such distributions are only finitely additive, not countably additive. 
The image of adherence is that the total probability of 1
does not attach itself to any open intervals that are separated from
0 and 1, since the probability that $X$ lies in any interval
interior to $(0, 1)$ is zero. Yet the entire probability of $1$
{\it adheres} to the endpoints of the unit-interval without amassing on
the points $0$ or $1$ themselves. Applications in statistics
were presented by 
\citet*{Kadane86}.
The  
article of \cite{Bingham10}
discusses the historical development of
the analytic assessment of finitely additive measures.  A
contribution of the present article to the extensive literature on
this topic is to show how to construct examples of such
distributions from sequential extensions of purely finite
applications with sensible properties.  Of course 
the definition of adherent masses can be embellished to allow the
points of agglutination to occur anywhere within the interval $[0,1]$. 

Our reported results can be understood as a completion of the
nonparametric investigations of 
\cite{Hill88}
who introduced the
generation of so-called $A_n$ and $H_n$ distributions for continuous
measurements;  and they share the finitist attitude toward
nonparametric inference described more extensively in \cite{Hill89}.
The $A_n$ and $H_n$ distributions for continuous data are
characterised by posterior distributions asserting uniform
probabilities that the next measurement will lie in each of the
various intervals defined by the order statistics of the
conditioning observations.  The first important application appeared in the
article of \cite{Berliner88}.
The so-called $A_n$ distributions
pertain to a measurement context that {\it does not allow ties},
while the $H_n$ distributions {\it do allow ties} among the
conditioning observations. In the context of events regarded
exchangeably that we study here, {\it ties are required}, since the
posterior probability for the next measurement equals the frequency
of occurrence of successes among the conditioning events. The sum
$S_N$ indicates the number of the events observed to be tied at $1$,
while $(N-S_N)$ indicates the number of them tied at $0$.  More
detailed commentary on the relationship of this work to Hill's
analysis appears in the technical report of 
\citet[pp. 52--55]{Lad93}.

The larger context of the present article is the widely studied {\it predictive characterisation} of exchangeable distributions that reduce the families of supporting distributions to either a parametric or a specifiable nonparametric class.  Relevant literature is reviewed in \cite{Fortini12}.
 While our purely finite results are completely nonparametric, the limiting distribution for the finite family, derived in Section \ref{sect:5.5limitfmd}, provides a recognisable unifying parametric envelope for the FMD family.  Prior to our exact characterization of FMD's, the nearest related results were limited to approximations and asymptotics.  These can be found in 
 \citet*{Berti09} and  \citet*{Cifarelli16}.

\section{\ Preliminary technical review}
\label{sect:prel}
Before formalising our problem for analysis, we first review
two computational features of the relations between the probability
mass function vector {\bf q}$_{N+2}$ and the conditional predictive probability
vector {\bf p}$_{N+1}$. Then we recall an algebraic result about
finite distributions that mimic frequencies over the {\it entire domain} of positive frequencies.  
\subsection{Inversion equations}\label{sect:inversion}
Well known in the context of exchangeability, standard formulas can
be used to compute conditional probability assertions in the vector
{\bf p}$_{N+1}$ from an unconditional probability mass function {\bf
	q}$_{N+2}$ when all components $q_{a,N+1}$ are strictly positive.  This is
achieved in this ``standard case'' by the nonlinear equations
\begin{equation}
p_{a,N} \ = \ \frac {(a+1) \ q_{a+1,N+1}} {(a+1)
	\ q_{a+1,N+1} \ + \ (N+1-a) \ q_{a,N+1}}\;, \ \ \ {\rm for}
\ a = 0, 1, ..., N\;. \  \label{eq:directeq}
\end{equation}
\noindent This result derives easily from the fact that $P[E_{N+1}|(S_N=a)]
= P[E_{N+1}(S_N=a)]/P(S_N=a)$ via the exchangeability structure and
algebraic simplification.  An article of \cite{deFinetti1952} 
contains an exhaustive analysis of the ``degenerate case'' when any
components
of {\bf q}$_{N+2}$ are allowed to equal zero.  The distribution of any subsequence of events from ${\bf E}_{N+1}$ is mixture-hypergeometric given the sum $S_{N+1}$, with a mixing function specified by some pmf vector  ${\bf q}_{N+2}$ in the unit-simplex ${\bf S}^{N+1}$.  This specification characterises the family of all finite exchangeable distributions over \vspace{.06cm} ${\bf E}_{N+1}$.

However, it is not widely recognised that these equations
(\ref{eq:directeq}) are invertible, yielding {\bf q}$_{N+2}$ as a
nonlinear function of {\bf p}$_{N+1}$ via a recursive formula derived in \citet*[p. 198]{Lad95}:
\begin{equation}
q_{0,N+1} \ = \ \{1 + \displaystyle\sum_{a =1}^{N+1}
\binom{N+1}{a}
\prod_{i=0}^{a-1}\ \frac{p_{i,N}}{1-p_{i,N}}\}^{-1}= \ \  , \ \ {\rm and}
\label{eq:intrigue}
\end{equation}

$$\ \ \ \ \ \ \ \ \ \ \ \ \ \ \ \ \  q_{a,N+1} \ = \ 
\binom{N+1}{a}
\prod_{i=0}^{a-1}\frac{p_{i,N}}{1-p_{i,N}}\ \ \ q_{0,N+1} \; \, \ \
\ \ \ \ {\rm for}\ \ \ \, a = 1, .. . ,
N+1\ .\ \ \ \ \ \ \ \ \ \ \ \ \ $$\\
\noindent The unconditional probabilities in the vector {\bf
	q}$_{N+2}$ are computed via products of increasing numbers of odds
ratios corresponding to the conditional probabilities composing {\bf
	p}$_{N+1}$. Their recursive specification begins with the odds ratio specifying $q_{1,N+1}$ and continues sequentially by multiplying successive odds ratios.  Normalisation is achieved through the determination of $q_{0,N+1}$ after the product odds ratio is completed for $q_{N+1,N+1}$. 
Equations (\ref{eq:intrigue}) will be used in several
derivations for the analysis reported in the present article\vspace{.06cm}.  

This inversion result is not merely a computational oddity.  It actually specifies a characterisation of exchangeable distributions over ${\bf E}_{N+1}$ that is equivalent to their well-known characterisation 
in terms of coherent pmf's for $S_{N+1}$ within the unit-simplex.  Any pmf ${\bf
	q}_{N+2}$ within the unit-simplex ${\bf S}^{N+1}$ for the sum $S_{N+1}$ yields a coherent predictive probability vector
${\bf p}_{N+1}$ within the $(N+1)$-dimensional unit-cube via equations (\ref{eq:directeq});  conversely, any predictive probability vector 
${\bf p}_{N+1}$ within the $(N+1)$-dimensional unit-cube yields a coherent pmf ${\bf
	q}_{N+2}$ for $S_{N+1}$ within the unit-simplex ${\bf S}^{N+1}$ via equations (\ref{eq:intrigue}).  An interesting feature of this characterisation of coherent finite exchangeable distributions is that it does not require any recourse to a parametric mixture distribution, as required in the literature summarised by  \citet[Section 3]{Fortini16}.  The conundrums they mention concerning solutions of functional equations are limited to the more restricted distributions they consider which respect {\it complete}\vspace{.06cm} additivity. 

The inverse transformation (\ref{eq:intrigue}) from {\bf p}$_{N+1}$ to {\bf q}$_{N+2}$   can also be expressed directly via
\begin{equation}\label{eq:genbin}
q_{a,N+1} \ = \ K\; 
\binom{N+1}{a}
\prod_{i=0}^{a-1} p_{i,N}\prod_{i=a}^{N}(1-p_{i,N})\ \; \,
\ {\rm for}\ \ \ \, a = 0, .. . ,
N+1\ ,
\end{equation}
where by convention, $\prod_{i=0}^{-1}
p_{i,N}=\prod_{i=N+1}^{N}(1-p_{i,N})=1$, and $K$ is the normalising constant 
\[
K= \frac{1}{\sum_{a=0}^{N+1} {N+1\choose a}\prod_{i=0}^{a-1} p_{i,N}\prod_{i=a}^{N}(1-p_{i,N})}\,\,.
\] This form  identifies the family of all exchangeable distributions as a generalisation of the Binomial distributions, for which every value of $p_{i,N}$ is constant at some value of $p \in(0,1)$. 

Finally, the equations for $q_{a,N+1}$ can be expressed
recursively in still another way as well. Specifically,
\begin{equation}
q_{a+1,N+1} \ = \ \left(\frac{N+1-a}{a+1}\right) \ \left(\frac{p_{a,N}}{1 - p_{a,N}}\right) \ q_{a,N+1} \ , \ \ \ \ {\rm for}\  a = 0, .. . ,N.
\label{eq:recursiveform}
\end{equation}

\noindent This recursive form identifies {\it linear} conditions
among the components of {\bf q}$_{N+2}$ arising from the inversion
equations.
\subsection{The inversion of reduction probabilities}\label{sect:theinversion}
Secondly, in the context of exchangeability, the reduction of a
probability mass function over the sum of $N+1$ events (represented
by {\bf q}$_{N+2}$) to the cohering distribution over the sum of the
first $N$ of them, {\bf q}$_{N+1}$, also follows well-known
formulas:
\begin{equation}
q_{a,N} \ = \ \frac{{N \choose a}}{{N+1 \choose a}} \ q_{a,N+1} \ + \ \frac{{N
		\choose a}}{{N+1 \choose a+1}} \ q_{a+1,N+1},\ \ {\rm for} \ a = 0,
1, ..., N\ .  \label{eq:usualreduction}
\end{equation}

\noindent These reduction equations derive from applying
exchangeability conditions to the fact that $P(S_N=a) \ = \
P[(S_N=a)\tilde{E}_{N+1}] + P[(S_N=a)E_{N+1}]$, where $\tilde{E}_{N+1} \equiv 1 - E_{N+1}$. 

The corresponding reduction formulas generating
lower-order {\it conditional} probabilities such as $p_{a,N-1}$ from
$p_{a,N}$ and $p_{a+1,N}$ are not usually considered.   These resolve
to the equations
\begin{equation}
p_{a,N-1} \ \ = \ \ \frac{p_{a,N} }{1 - p_{a+1,N} + p_{a,N}}\ ,
\ \ \ \ {\rm for} \ \ a \ = \ 0, 1, ..., N-1\ , \label{eq:reduction}
\end{equation}

\noindent which derive from applying equations (\ref{eq:intrigue})
to those of (\ref{eq:usualreduction}).  See 
\citet[pp.
199, 204]{Lad95}.
The reduction equation (\ref{eq:reduction}) will be used
to prove the reduction and extension theorems to be discussed in
Section \ref{sect:issue}.   When exchangeable {\it extensions} are addressed, it will be used in a form relevant to
the next larger value of $N$:
%
\[
p_{a,N} \ \ = \ \ \frac{p_{a,N+1}}{
	1 - p_{a+1,N+1} + p_{a,N+1}} \ ,
\ \ \ \ {\rm for} \ \ a \ = \ 0, 1, ..., N\ .  \tag{\ref{eq:reduction}$'$}
\]

It is worth remarking to conclude this Section that equation (\ref{eq:usualreduction}) is 
a special case of a general reduction equation that would reduce {\bf q}$_{N+2}$ to, say, 
{\bf q}$_{M+1}$ for any value of $M \leq N$:

\begin{equation}
q_{a,M}\ \ =\ \ \sum_{A=a}^{N+1-(M-a)}\; \frac{{A \choose a}{N+1-A \choose M-a}}{{N+1\choose M}}\ \ q_{A,N+1}\ \ , \ \ \ \ \ {\rm for} \ a=0,\ldots,M\ \ \ .  
\label{eq:generalizedusualreduction}
\end{equation}

\noindent This stems from the fact that partial sums of exchangeable sequences are distributed as mixture hypergeometric with respect to the sum of the entire sequence.

\subsection{FMD's and the improper uniform  distribution}\label{sect:thm1}
Predictive probabilities for exchangeable sequences based on the
improper uniform prior {\it always} mimic positive conditioning frequencies,
for {\it any} observed frequency within $(0,1)$. An algebraic
analysis of finite agreements with positive conditioning frequencies
by \citet[pp. 207--208]{Lad95} 
has yielded a complete
explicit result in every finite context for which {\bf q}$_{N+2}$ is
strictly positive. It places the infinitely extendible improper uniform 
mixture distribution within the context of the class of positive
frequency mimicking distributions for any finite size of $N+1$. We
state it here as Theorem \ref{thm:first}, followed by a brief
discussion.  \vspace{1.5mm} 


\begin{theorem}\label{thm:first}
 Suppose $N+1$ events are regarded
exchangeably. If $q_{a,N}>0$  over the values of
$a = 1, 2, ..., N-1$, and the associated conditional probabilities for $E_{N+1}$ agree with positive conditioning frequencies, $p_{a,N} = a/N$, then the pmf vector {\bf
	q}$_{N+2}$ must subscribe to
the restrictions that \vspace{.1cm} \\
i.$\;$) \; $q_{0,N+1} \in [0,1)$\ \ $ {\rm and} \ \ 0 \; \leq
\; q_{1,N+1} \; \leq \;   \frac{(1 - q_{0,N+1})}{2 \; [N/(N+1)]
	\ H(N)} \; \equiv \; B_{N+1}(q_{0,N+1})\;$,\\

\hspace*{0.25cm}$ \ \ \ \ \ $where $H(N)\ = \ \Sigma_{a=1}^{ \; N} \
a^{-1}$,  the $N^{th}$ harmonic sum\;;

\noindent ii.$\; $) \ \ $q_{a,N+1} \ = \ (1/a) \ [N/(N-a+1)] \
q_{1,N+1}$ \ \ \ for $a = 2, ..., N$; \ \ \  and \vspace{.2cm}

\noindent iii.) \ \ $q_{N+1,N+1} \ = \ 1 - q_{0,N+1} - 2q_{1,N+1} \
[N/(N+1)] \ H(N)\; $. \vspace{.2cm}  \ \ \ \ \  
\end{theorem}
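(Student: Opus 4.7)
The plan is to iterate the recursive inversion equation (\ref{eq:recursiveform}) under the substitution $p_{a,N}=a/N$, and then close the system using the unit-sum constraint on $\mathbf{q}_{N+2}$. The positivity hypothesis $q_{a,N}>0$ for $a=1,\ldots,N-1$ guarantees that the conditioning events $(S_N=a)$ have positive probability on that range, so the frequency mimicking assertions are well defined precisely for $a\in\{1,\ldots,N-1\}$; in particular, $p_{a,N}/(1-p_{a,N}) = a/(N-a)$ on that range. Feeding this into (\ref{eq:recursiveform}) yields the multiplicative recursion
\[
q_{a+1,N+1} \;=\; \frac{(N+1-a)\,a}{(a+1)(N-a)}\,q_{a,N+1},\qquad a=1,\ldots,N-1,
\]
which notably links only the interior components $q_{1,N+1},\ldots,q_{N,N+1}$, leaving $q_{0,N+1}$ and $q_{N+1,N+1}$ as free endpoints — which matches the parameter structure announced in (i) and (iii).

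Next I would unfold this recursion from $a=1$, writing $q_{a,N+1}$ as a product of $a-1$ rational factors times $q_{1,N+1}$. The numerator and denominator pieces $\prod(N+1-k)$, $\prod k$, $\prod(k+1)$, $\prod(N-k)$ each telescope into ratios of factorials, and after cancellation I expect the entire product to collapse to $q_{a,N+1} = (1/a)\,[N/(N-a+1)]\,q_{1,N+1}$, giving (ii) (and recovering the boundary values $q_{1,N+1}=q_{1,N+1}$ and $q_{N,N+1}=q_{1,N+1}$ as sanity checks).

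For (iii), substituting (ii) into $\sum_{a=0}^{N+1}q_{a,N+1}=1$ and isolating $q_{N+1,N+1}$ reduces the problem to evaluating $\sum_{a=1}^{N}N/(a(N-a+1))$. I would use the partial fraction
\[
\frac{1}{a(N-a+1)} \;=\; \frac{1}{N+1}\left[\frac{1}{a}+\frac{1}{N-a+1}\right],
\]
after which each resulting tail is just $H(N)$ — the second via reindexing $a\mapsto N-a+1$ — producing the coefficient $2N H(N)/(N+1)$ seen in (iii). The upper bound on $q_{1,N+1}$ in (i) then drops out of the non-negativity constraint $q_{N+1,N+1}\geq 0$ applied to (iii), and $q_{0,N+1}<1$ follows because $q_{0,N+1}=1$ would force $q_{a,N}=0$ for all $a\geq 1$ via the reduction equation (\ref{eq:usualreduction}), contradicting the positivity hypothesis. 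The main obstacle I foresee is cosmetic rather than conceptual: keeping track of the index shifts in the harmonic-sum identity so the answer lands exactly at $H(N)$ rather than some shifted variant, and confirming that the boundary cases $a=1$ and $a=N$ of formula (ii) align with the endpoints of the recursion. Everything else is direct substitution into the inversion machinery already established in Section \ref{sect:inversion}.
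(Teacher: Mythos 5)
Your proposal is correct, and all the algebra checks out: the telescoping product of $\frac{(N+1-k)k}{(k+1)(N-k)}$ does collapse to $\frac{N}{a(N-a+1)}$, the partial-fraction identity yields exactly $\frac{2N}{N+1}H(N)$, and the non-negativity of $q_{N+1,N+1}$ together with the reduction-equation argument for $q_{0,N+1}<1$ delivers part (i). The paper itself does not print a proof of Theorem~\ref{thm:first} (it defers to the cited 1995 analysis), but your route through the recursive inversion equation (\ref{eq:recursiveform}) restricted to the indices $a=1,\ldots,N-1$ where $q_{a,N}>0$ is precisely the intended derivation, so nothing further is needed.
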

%
%
%
%
%

\noindent {\bf Comments:}  Since the vector {\bf q}$_{N+2}$ lies
within the $(N+1)$-dimensional unit-simplex, the presumed $(N-1)$
assertions that conditional probabilities equal {\it any} positive
conditioning frequencies leave only two dimensions of freedom in
specifying {\bf q}$_{N+2}$. Identifying the free variables as
$q_{0,N+1}$ and $q_{1,N+1}$, their restricted triangular
2-D region (specified in restriction \emph{i} of  Theorem \ref{thm:first}) diminishes to the 1-dimensional unit-interval as the size of $N$ increases. Although Theorem \ref{thm:first} allows
$q_{0,N+1}$ to take any value within $[0,1)$, it presses $q_{1,N+1}$
toward a limit of $0$ as $N$ increases, because the harmonic series
diverges. Thus, according to statement \emph{ii} of  Theorem \ref{thm:first}, all other
components $q_{a,N+1}$ converge to $0$ as well for values of $a = 2, ...,
N$.  The limiting property of the sum $\Sigma_{a=1}^{\; N} \
q_{a,N+1}$ requires further analysis, since the number of summands in the series is unlimited as $N$ increases. The further analysis
provides that this entire series converges to $0$. In light of this result
which we now address, the value of
$q_{N+1,N+1}$ converges to $1-q_{0,N+1}$. \vspace{.2cm}

Statement \emph{iii} of Theorem \ref{thm:first} represents the unitary
summation constraint on components of {\bf q}$_{N+2}$.  The limiting
behaviour of the second subtracted term in this equation appears
problematic because it is the product of $q_{1,N+1}$, which is
converging to 0, and a coefficient that increases without bound --
the harmonic series.  This second subtracted term equals
$\Sigma_{a=1}^{\ N \ }\ q_{a,N+1}$, which is the sum of an unbounded {\it
number} of terms each of which converges to $0$. That this entire
sum converges to $0$ merits a second theorem of its own, which we
shall now formulate and prove. What this result implies is that the
limit of the finitely additive distributions for the frequency of occurrences is concentrated only near the endpoints $0$ and $1$.\vspace{.15cm} 
\begin{theorem}
$\lim_{N\rightarrow\infty}P(1 \leq S_{N+1} \leq N) 
\ =  \ 0 $ \  under the conditions of Theorem \ref{thm:first}.
\end{theorem}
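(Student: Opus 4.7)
The plan is to use the exact characterisation from Theorem~\ref{thm:first} to rewrite $P(1\leq S_{N+1}\leq N) = \sum_{a=1}^{N} q_{a,N+1}$ in closed form, and then to eliminate the divergent harmonic factor using the feasibility bound in part (i).

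First I would read off the central identity by a simple subtraction: combining the unit-sum of ${\bf q}_{N+2}$ with part (iii) of Theorem~\ref{thm:first} gives
\[
\sum_{a=1}^{N} q_{a,N+1} \;=\; 1 - q_{0,N+1} - q_{N+1,N+1} \;=\; 2\,q_{1,N+1}\,\frac{N}{N+1}\,H(N).
\]
As a cross-check, the same identity emerges from part (ii) directly. Since $\frac{N}{a(N-a+1)}q_{1,N+1}$ also equals $q_{1,N+1}$ at the boundary values $a=1$ and $a=N$, the partial-fraction identity $\frac{1}{a(N-a+1)} = \frac{1}{N+1}\bigl(\frac{1}{a}+\frac{1}{N-a+1}\bigr)$ turns $\sum_{a=1}^{N}\frac{N}{a(N-a+1)}$ into two relabelled copies of $H(N)$, recovering the factor $\frac{2NH(N)}{N+1}q_{1,N+1}$.

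Second, I would substitute the upper bound $q_{1,N+1}\le (1-q_{0,N+1})/(2[N/(N+1)]H(N))$ from part (i) into this identity. The $H(N)$ factors cancel exactly, leaving the clean sandwich
\[
0 \;\le\; \sum_{a=1}^{N} q_{a,N+1} \;\le\; 1-q_{0,N+1}.
\]

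The main obstacle is then the final convergence of this bound to $0$. Exchangeability forces both $q_{0,N+1}$ and $q_{N+1,N+1}$ to be nonincreasing in $N$ (via the event inclusions $\{S_{N+1}=0\}\subseteq\{S_N=0\}$ and its mirror image), so these sequences converge to limits $q_0^{\star}$ and $q_1^{\star}$, and the middle sum has limit $1-q_0^{\star}-q_1^{\star}$. The theorem is therefore equivalent to saying that the limiting probability mass adheres completely to the two endpoints, $q_0^{\star}+q_1^{\star}=1$. I expect this to be the delicate part of the argument; the natural tool is the reduction identity $q_{0,N}=q_{0,N+1}+q_{1,N+1}/(N+1)$ from Section~\ref{sect:theinversion}, which ties the increments of the monotone sequence $q_{0,N+1}$ to the values $q_{1,N+1}$ and should allow one to bootstrap the $H(N)$-scale bound on $q_{1,N+1}$ from part (i) into the faster decay needed to force the full sum to $0$.
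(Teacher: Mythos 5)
Your opening computation is correct and is a clean exact restatement of what Theorem~\ref{thm:first} gives: parts (ii)--(iii) (or your partial-fraction identity) yield $\sum_{a=1}^{N} q_{a,N+1} = 2\,q_{1,N+1}\,[N/(N+1)]\,H(N)$, and substituting the bound of part (i) collapses this to $\sum_{a=1}^{N} q_{a,N+1} \leq 1-q_{0,N+1}$. But this is exactly where the proposal stops being a proof, as you yourself acknowledge. The bound $1-q_{0,N+1}$ does not tend to $0$, and it cannot be improved by working at level $N+1$ alone: for each fixed $N$ the choice $q_{0,N+1}=0$ and $q_{1,N+1}=B_{N+1}(0)$ satisfies every hypothesis of Theorem~\ref{thm:first} and makes the interior mass equal to $1$. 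So the conclusion genuinely requires the cross-level coherence you only gesture at in your final paragraph, and the bootstrap you propose there does not close the gap: the increment identity $q_{0,N}-q_{0,N+1}=q_{1,N+1}/(N+1)$ combined with part (i) bounds the increments only by $(1-q_{0,N+1})/(2N H(N))$, and $\sum_N 1/(N H(N))$ diverges, so telescoping places no constraint on $\lim_N q_{0,N+1}$ and cannot by itself force $q_0^{\star}+q_1^{\star}=1$.

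The paper's proof supplies precisely the ingredient you are missing, via a mechanism different from anything in your sketch. It fixes an integer $M$ and reduces ${\bf q}_{N+2}$ to the implied mass function of $S_M$ through the hypergeometric reduction (\ref{eq:generalizedusualreduction}); this yields $q_{a,M} \leq \binom{M}{a}\,[(N+2-M)/(N+1)]\,q_{1,N+1}$ for $a=1,\dots,M-1$, hence $\sum_{a=1}^{M-1} q_{a,M} \leq 2^{M}\,[(N+2-M)/(N+1)]\,q_{1,N+1}$, which tends to $0$ because the combinatorial factor is frozen at $2^M$ while $q_{1,N+1}=O(1/H(N))$ is driven to zero by part (i). The point is this decoupling: at level $N+1$ the combinatorial weight and the harmonic bound are locked together as $2\,[N/(N+1)]\,H(N)\cdot q_{1,N+1}$ and cancel exactly, which is why your route terminates in the uninformative bound $1-q_{0,N+1}$; passing to a fixed lower level $M$ breaks that cancellation. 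To complete your argument you would need to import this reduction step (or an equivalent device), so as written the proposal has a genuine gap at its decisive step.
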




\begin{proof} Let $M$ be any fixed integer, $M < N+1$. Now using
equation (\ref{eq:generalizedusualreduction}) to reduce the mass function for $S_{N+1}$
to its implied mass function over the sum of $M$ events yields
the result that 
\[ P(S_M = a) \ \leq \ ^MC_a \ [(N+2-M)/(N+1)] \ q_{1,N+1} \ \ {\rm for \ any} \ a = 1,...,M-1\; .
\nonumber
\]
\noindent Algebraic details of this derivation are presented  expansively in
the technical report of 
\citet[pp 19--20]{Lad93}.
 Since the value of $a$ appears only in the combinatoric
expression $^MC_a \equiv M!/[a!(M-a)!]$, the sum of these probabilities is bounded:
\begin{equation}
\sum_{\; a=1}^{M-1}\; P(S_M = a) \ = \ \sum_{\ a=1}^{M-1} \ q_{a,M} \ \leq \ 2^M \ [(N+2-M)/(N+1)] \ q_{1,N+1} \ \ ,
\label{eq:sumbound}
\end{equation}
\noindent because $\sum_{a=1}^{ M-1}\;^MC_a \; < \; 2^M$. Thus, the
sum $\sum_{\ a=1}^{M-1}\; P(S_M = a)$  converges to $0$ along with $q_{1,N+1}$ as $N$
increases. 
Morever, since Equations (\ref{eq:usualreduction}) and (\ref{eq:reduction}) imply that the conditions of Theorem \ref{thm:first} apply to sequences of any size $M$ as well as $N$ increases, 
it follows that  $\lim_{N\rightarrow\infty}P(1 \leq S_{N+1} \leq N) 
\ =  \ 0.$ \ \ $\qed $ 
\end{proof}

The improper prior distribution that is uniformly zero over the open interval $(0,1)$, whose mixture supports the conditions of
Theorem \ref{thm:first} for every value of $N \geq 1$, can be understood as the limit of a sequence of finitely
additive mixing distributions that are all FMD's on the entire open unit-interval
$(0,1)$.  The present
article extends the analysis of Theorem \ref{thm:first} to conditions when the
predictive probabilities are presumed only to mimic frequencies {\it
only} over a specified rational interval properly {\it within}
$(0,1)$.  For finite sizes of $N$ the associated pmf's are found to be well behaved and appealing for use in applied problems. Herein, we shall also show how to construct a whole family of limiting distributions (FMD's over restricted subdomains) that exhibit
agglutinated masses which merely adhere to the endpoints, $0$ and $1$, with
recognisable degrees of stickiness.  These variations have their
sources in the size and position of the frequency mimicking
subdomains within $(0,1)$. 

\section{\ FMD space, reference FMDs, and computational software}\label{sect:fmd}

In the course of this discussion we shall display and assess a few
specific reference distributions that all satisfy the following
shared properties of frequency mimicking distributions, for various
sizes of N:  
\begin{itemize}
\item {$P(E_{N+1}|S_N=a)= a/N$ \ \ if $a \ \in [a_1, a_2]$ \ ; }
\item {Conditional probabilities
$P(E_{N+1}|S_N=a)$ are monotone non-decreasing in $a$;}
\item{$P(E_{N+1}|\;S_N=\;0) \geq p_L(N) > 0$, \ where $p_L(N) \leq a_1/N$} \ ; \ and
\item {$P(E_{N+1}|S_N=N) \leq p_U(N) < 1$, \ where $p_U(N) \geq a_2/N \ $.}
\end{itemize}

\noindent To denote assertions in the form of these general specifications, we
shall refer to ``asserting 
$PaN[a_1, a_2, p_L(N), p_U(N)]$.'' When
referring to such  assertions for a lower value $n$ of $N$, we use the
notation $Pan[^., ^.,^., ^.]$.  {\it Notice explicitly} that the pair $(\frac{a_1}{N},
\frac{a_2}{N})$ denote the endpoints of the subdomain of presumed
frequency mimicking, whereas $p_L(N)$ and $p_U(N)$ denote lower and
upper bounds on predictive probabilities when the conditioning
frequencies equal $0$ and $1$, respectively.  The upper bound on $p_L(N)$ follows from the non-decreasing feature of the $p_{a,N}$;  similarly for the lower bound on $p_U(N)$. 

As mentioned in our introduction, these conditions do not identify a
unique cohering probability distribution, but rather a whole convex
space of distributions characterised via de Finetti's FTP.  We can
get a feel for the range of distributions composing this space by
studying four specific reference distributions near the extremes and
at the heart of this space, as displayed in Figure
\ref{fig:panpicture}. These differ only in how they assess
conditional probabilities $P(E_{N+1}|S_N = a)$ when $a/N$ lies
outside the interval of unadjusted relative frequencies,
$[\frac{a_1}{N}, \frac{a_2}{N}]$. The distributions we examine are
aptly named according to their distinctive properties: Linear,
Quartic, Weak Extreme and Strong Extreme.  In studying their
algebraic descriptions which follow, refer to the labeled example
functions displayed in Figure \ref{fig:panpicture}. {\it Be aware} that these displayed
functions have been produced to appear continuous for display
purposes relevant to {\it any} size of $N$. In fact, based on the
specification of a specific finite $N$, they are discrete functions of
$a/N$ for $a = 0, 1, 2, ..., N$.

\begin{figure}[!th]
\begin{center}
\includegraphics[width=1\linewidth]{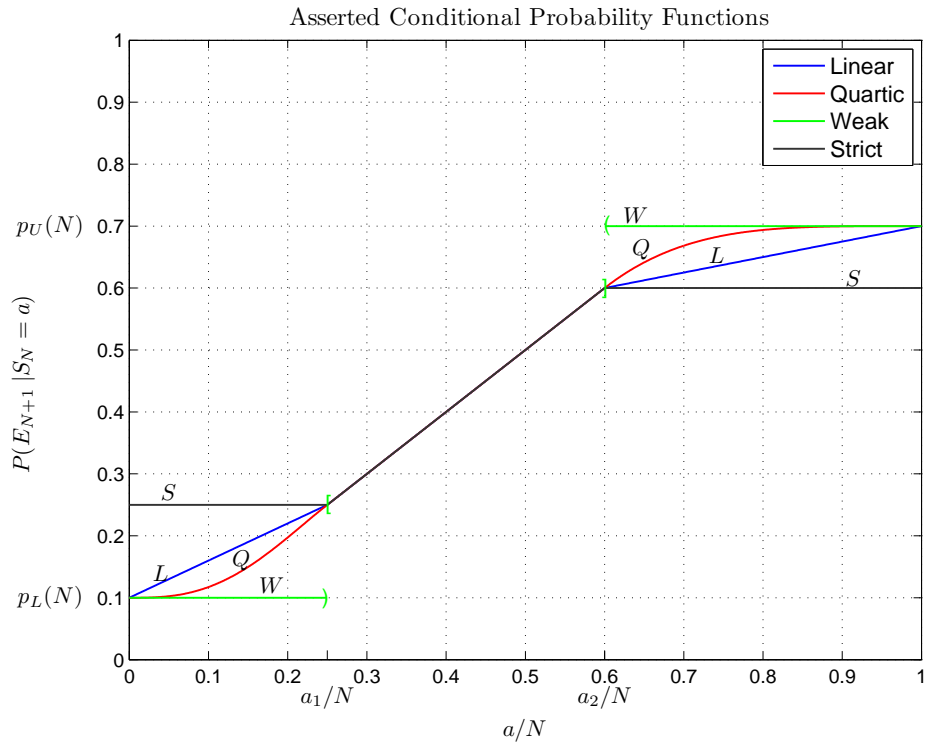}\\
\caption{Linear, Quartic, Weak and Strict conditional probability functions which agree with conditioning frequencies over the restricted subdomain between 
$\frac{a_1}{N} = .25$ and $\frac{a_2}{N} = .60$.
Notice the positions of $p_L(N) = .10$, and
$p_U(N) = .70$ on the ordinate.  }
 \label{fig:panpicture}
\end{center}
\end{figure}
\begin{enumerate}
{\item {\bf Linear:}

The values of $P(E_{N+1}|S_N=a)$ increase linearly over
$\frac{a}{N}$ within $[0, \frac{a_1}{N}]$ and $[\frac{a_2}{N},
1$].}

{\item {\bf Quartic:}

The values of $P(E_{N+1}|S_N=a)$ are specified by a polynomial
increasing quartic function $Q_L(\frac{a}{N})$ for values of \
$a/N \in [0,\frac{a_1}{N}]$. This lower quartic function
$Q_L(^.)$ is determined to satisfy the end-point and derivative
conditions on $Q'_L(^.)$ that
\[
\begin{array}{cccc}
 Q_L(0)=p_L(N), & Q_L(\frac{a_1}{N})=\frac{a_1}{N},& Q_L'(0)=0,
\ {\rm and} & Q_L'(\frac{a_1}{N})=1 \ \ \ .
\end{array}
\]
Above the upper end of the frequency mimicking interval, the
values of $P(E_{N+1}|S_N=a)$ are defined by a polynomial
increasing quartic function  $Q_U(\frac{a}{N})$ for values of \
$a/N \in [\frac{a_2}{N},1]$. This upper quartic function is
determined by the conditions on it and on its derivative
function, $Q'_U(^.)$, that
\[
\begin{array}{cccc}
Q_U(\frac{a_2}{N})=\frac{a_2}{N}, & Q_U(1)=p_U(N), & Q_U'(\frac{a_2}{N})=1, \ {\rm and} & Q_U'(1)=0 \
\ \ .
\end{array}
\] }
{\item {\bf Weak extreme:}
\[
P(E_{N+1}|S_N=a)=p_L(N) \ \ \mbox{ if } \ a/N \ \in [0,\frac{a_1}{N})
\ \ \ \ \ \ , \ {\rm and} \]
\[
P(E_{N+1}|S_N=a)=p_U(N) \ \ \mbox{ if } \ a/N \ \in
(\frac{a_2}{N},1] \ \ \ \ \ \ . \ \ \ \ \ \  \]}
{\item {\bf Strict
extreme:}
\[
P(E_{N+1}|S_N=a)=\frac{a_1}{N} \ \ \mbox{ if } \ a/N \ \in [0,\frac{a_1}{N})
\ \ \ \ \ \ \ \ , \ \ \ {\rm and} \]
\[
P(E_{N+1}|S_N=a)=\frac{a_2}{N} \ \  \mbox{ if } \ a/N \ \in (\frac{a_2}{N},1]
\ \ \ \ \ \ . \ \ \ \ \ \  \]}
\end{enumerate}
The weak and strong
functions portray opposing attitudes toward the strictness of the
proclaimed frequency mimicking interval:  the strict function never
allows conditional probabilities below the lower endpoint value of
this interval nor above the upper endpoint;  the weak function
proclaims conditional probabilities equal to the proclaimed minimum
and maximum valuations $p_L(N)$ and $p_U(N)$ as soon as the
conditioning frequency is observed outside the FMD interval. The
linear functions bisect these bounding regions, and agree with the
lower and upper limit points on conditional probabilities when the
conditioning sum of successes equals $0$ or $N$. The quartic
functions also agree at these endpoints, but the conditions on their
derivatives ensure that their approaches to the endpoints and to the
agreeable frequency
region are smooth. 

The software we have designed to generate the graphical results
displayed in the next Section is freely available, using the link \url{http://www.unipa.it/sanfilippo/mimic}. User friendly, there are two version: one is based on Shiny R and one it is based on  MATLAB code.
The user need only enter the
sizes of $N, \frac{a_1}{N}, \frac{a_2}{N}, p_L(N)$ and $p_U(N)$ in
prompting boxes, and graphical displays of the associated linear,
quadratic, weak and strong extreme mass functions are produced. In addition to the function values of $p_{a,N}$, the associated mass function values
$q_{a,N}$ are produced as well, computed via the  equations
(\ref{eq:intrigue}). We shall view examples of these computations in the next
Section. 

\section {\ Numerical examples}\label{sect:numericalexamples}

Figures~\ref{fig:pdfpicture} and \ref{fig:pdfpicturebig} display probability mass
functions {\bf q}$_{N+2}$ pertinent to our apiary example, computed
for the four reference FMDs over a limited domain that we now have formalised. In evaluating these Figures, you should be aware of one
detailed feature of 
\begin{figure}[!ht]
\begin{center}
\includegraphics[width=1\linewidth]{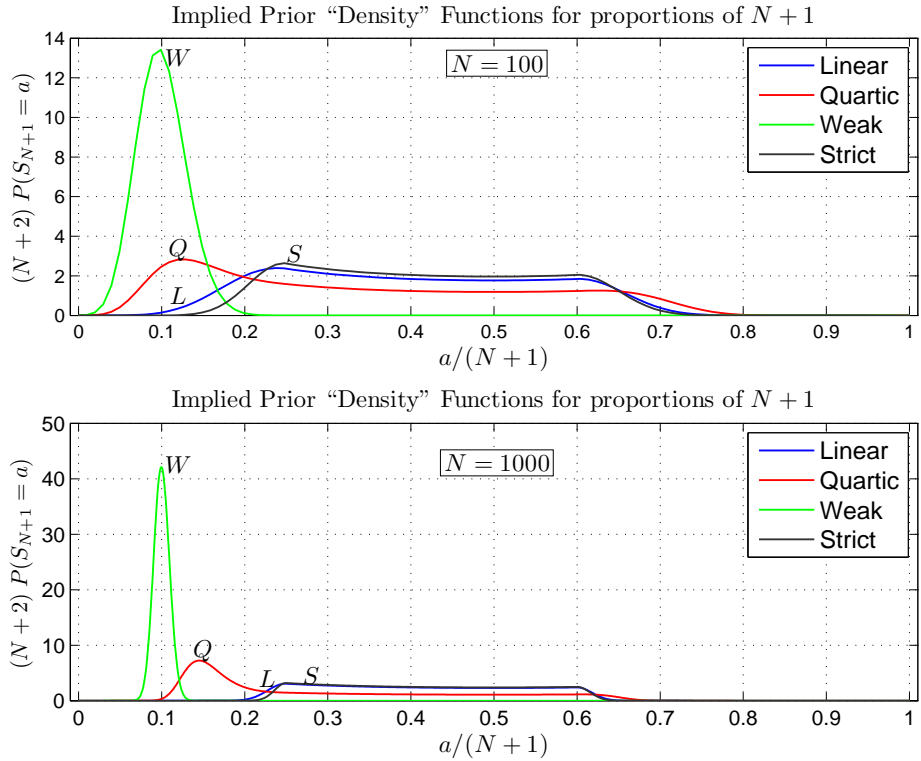}
\caption{The upper graph displays
 functions proportional to the pmf functions labeled L, Q, W and S when $N =
 100$, while the lower graph displays these same functions implied when
 $N = 1000$.  These correspond to the $PaN$ assertions displayed in Figure \ref{fig:panpicture}.}

\label{fig:pdfpicture}
\end{center}
\end{figure}
their construction. Although they appear to be
continuous, the functions presented are actually mass functions on a
grid of points within the unit-interval, with positive masses only
on the discrete domain of points appropriate to the size of $(N+1)$.
These mass points have been transformed into smoothed normed
histograms in the Figures.  This convention will allow
us to display and to distinguish mass functions associated with
different sizes of $N$ on the same graph during the course of our
discussion in Section \ref{sect:issue}. Without recourse to this convention, the
varying scales of {\bf q}$_{N+2}$ functions would preclude visual comparison.  Specifically, the mass functions have
been computed as normed ``density histograms'' as described by \cite[pp. 114--115]{Martinez02}.
In brief, if a histogram is
constructed with bin widths $h$, each histogram frequency $f_i$ is
normalised to $d_i = f_i/h$, assuring that the displayed histogram
``density'' integrates to $1$. When any one of our mass functions
based on $N+1$ events is constructed, it involves $N+2$ bins.  Thus,
each bin width in the unit-interval equals $(N+2)^{-1}$.  As a
result, ordinate labels on these functions read ``$(N+2)\ P(S_{N+1}=a)$'',
and the title on the function in various Figures refer to them as
``Density'' functions, in quotation marks.

Figure \ref{fig:pdfpicture} displays probability mass functions for
$S_{N+1}$ that are implied by the L, Q, W and S specifications under
the frequency mimicking assertions for the apiary example shown in
Figure \ref{fig:panpicture}.  The top bank of the Figure applies to
values of $N$ equal to $100$, while the lower bank is constructed
for $N$ equal to $1000$. The values of $p_L(N)$ and $p_U(N)$ are
$.1$ and $.7$, respectively, for both of these examples.  It is
apparent that each of the four pmf ``density'' types contracts and
sharpens as $N$ increases. (Notice the different scales on the
ordinate axes of the two graphs.)  The ``density'' for the Weak
extreme appears unusual relative to those of the other three
reference distributions, on account of its concentration in a narrow
interval around $a/(N+1) = .1$. This {\it weak extreme} function is
included among the reference functions more for the formal reason of
its extremity rather than its applicability in any instance. The
discontinuous jump in the conditional probability
value of $P(E_{N+1}|S_N=a)$ from $p_L(N)$ to $\frac{a_1}{N}$ which it portrays as
$a/N$ crosses the threshold value of $\frac{a_1}{N}$ is not very
realistic.  It is interesting that to the contrary, the Linear function tracks closely 
with the Strict Extreme.  The Quartic function appears as
intermediate between the Weak and Strong Extreme functions. These
comments
are relevant both when $N = 100$ and when $N = 1000$.

The constriction of the reference distributions becomes even more
dramatic as N increases further. This can be seen in the lower graph
of Figure~\ref{fig:pdfpicturebig} which displays much more detail
for the size of $N = 10^5$, though the display is limited here to the Linear and Strict Extreme functions. For now, the
only new functions important to notice in the lower panel of 
Figure~\ref{fig:pdfpicturebig} are the virtually identical
solid-lined ``density'' functions labeled $L_1$ and $S_1$, pertinent
to $N_1 = 10^5$. For comparison purposes, the dash-dot-dash lined
functions $(^{\_\;.\;\_\;.\;\_})$ labeled $L_2$ and $S_2$ are
replicas of the pdf's for L and S when $N = 100$, exhibited here on
the same scale via the normalised histogram ``density'' transforms.
The remaining two (nearly identical) purely dash-lined functions $(^{\_\ \_\
\_})$ labeled $L_R$ and $S_R$ that also appear in
Figure~\ref{fig:pdfpicturebig} will be discussed separately in
Section \ref{sect:issue} on ``reduction probabilities''.

\begin{figure}[!ht]
\begin{center}
\includegraphics[width=1\linewidth]{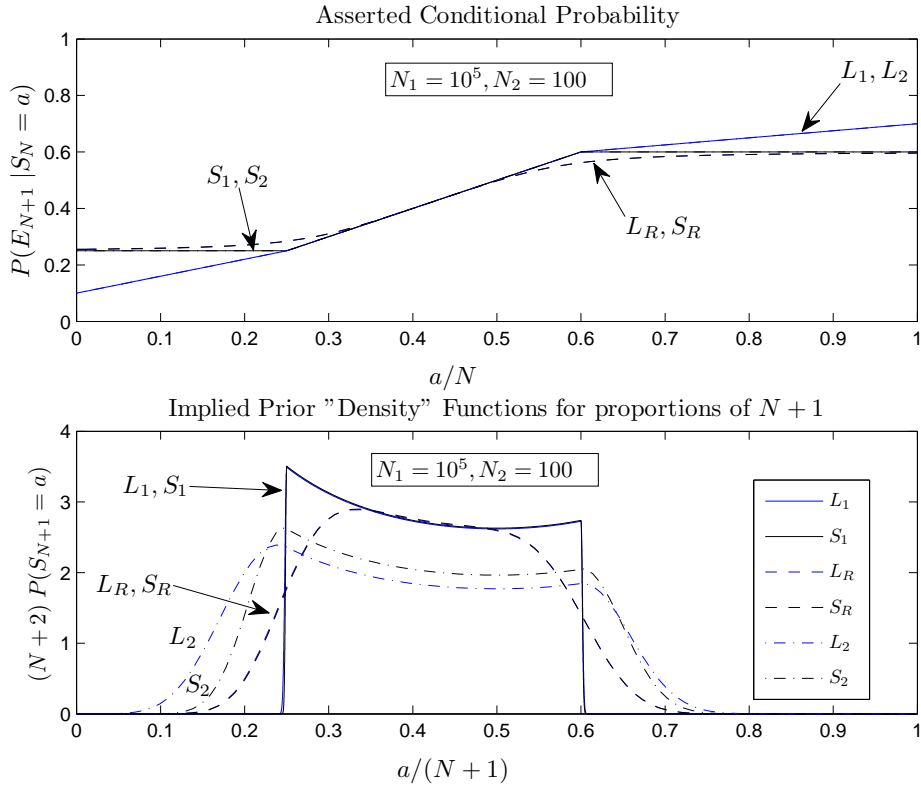}
\caption{The solid lines of the upper graph display conditional
probability functions relevant to the L and S specifications both
for $N_1 = 100,000$ and for $N_2 = 100$.  In the lower graph, the
associated $L_1$ and $S_1$ mass functions are virtually indistinguishable
solid lines when $N_1 = 100,000$.  The $L_2$ and $S_2$ mass functions for $N_2 = 100$
are distinguishable as the
pair of ``dash-dot-dash'' functions. Additionally, the lower graph
includes a pair of indistinguishable ``dash-dash-dash''
functions labeled $L_R$ and $S_R$. These two
functions, along with the associated function labeled $L_R, S_R$ in the
upper Figure are discussed in the ``Example of Complete Reductions'' in
Section \ref{sect:issue} They derive from a ``reduction'' of
the specification for $N_1 = 100,000$
to implied functions relevant to $N_R = 100$.}
\label{fig:pdfpicturebig}
\end{center}
\end{figure}

The $L_1$ and $S_1$ functions displayed in lower Figure \ref{fig:pdfpicturebig} exhibit clearly
a sharp bimodality in the {\bf q}$_{N+2}$ vector.  In this light it
can now be recognised that most all the ``density'' functions shown in
Figures 2 and 3 have this bimodal feature as well, though it is not so
readily apparent for the smaller values of $N$ seen in Figure \ref{fig:pdfpicture}. 

The computational software we have made available allows the
investigation of a number of sensitivity issues in the
specification of pmf's {\bf q}$_{N+2}$ via assertions in the form of
$PaN[a_1,a_2,p_L(N),p_U(N)]$.  A report on one such investigation in
Appendix 1 displays the sensitivity of the {\bf q}$_{102}$ vectors
shown in Figure 2 (Top) to the specification of $p_U(N)$, particularly
relevant to the case of the Weak extreme function.  More extensive comparisons 
here would detract from our focus in this introductory article on some pressing 
issues.

Our understanding of the {\bf q}$_{N+2}$ ``densities'' will deepen
as we now investigate the implications of the coherent reduction
equations (\ref{eq:usualreduction}) and (\ref{eq:reduction}) for the inference they imply for smaller sized
samples.

\section {Issues of reduction and extendibility}
\label{sect:issue}

Theorem \ref{thm:first} of Section \ref{sect:thm1},  which pertains to distributions that mimic
{\it all positive} conditioning frequencies, has some annoying
corollaries. 

For one, asserting $p_{a,N} = a/N$ for all the values
of $a = 1, 2, ..., N-1$ implies, via coherency, the assertion of
$p_{a,n} = a/n$ for $a = 1, 2, ..., n-1$ {\it for every integer} $n
< N$ as well ($n \geq 2$). 
This can be seen by inserting $p_{a,N} =
a/N$ and $p_{a+1,N} = (a+1)/N$ on the right-hand-side of equation
(\ref{eq:reduction}) which yields $p_{a,N-1} = a/(N-1)$ for all
appropriate values of $a$. 

This is one of several coherence
properties that are problematic for frequentist estimates of
probabilities based on a ``large'' number of observations.
Interpreting conditional probabilities as ``estimates'' of ``the
probability'', coherency then would require frequentist estimates
{\it for any smaller number of observations as well}, no matter what
the observed relative frequency might be, as long as it does not equal $0$ or $1$.

In the context of frequency mimicking distributions over a limited
domain studied here, coherency also requires a specific reduction of
the conditional probability function {\bf p}$_{N+1}$ to lower orders
{\bf p}$_{n+1}$, but frequency mimicking is required only for a
limited range of values $n < N$ and for a limited range of values
for ``a'' among the component conditional probabilities $p_{a,n}$.
The implications are similar to those when the frequency mimicking
domain is unlimited, but they are not universal. We report them as
Theorem \ref{thm:third}   in Section  \ref{sect:reductive}  and Theorem \ref{thm:fourth} in Section  \ref{sect:acautionary} . They motivate the usefulness of
frequency mimicking assertions over a limited subdomain for
inference in applications to finite population problems.

\subsection{Reductive implications of restricted frequency mimicking}\label{sect:reductive}

Theorem \ref{thm:third} specifies how the assertion of frequency mimicking
conditional probabilities over a limited subdomain implies frequency
mimicking probabilities for a specifically limited number of shorter
event sequences, and over increasingly
more restricted subdomains.
\begin{theorem}\label{thm:third}
The frequency mimicking assertions of \
$PaN[a_1, a_2, p_L(N), p_U(N)]$ \ for $N+1$ events regarded
exchangeably imply via coherency the concomitant frequency mimicking assertions of \
$Pan[a_1, a_2 - (N-n), p_L(N), p_U(N)]$ \ for each smaller integer $n$
within $[n_0, N]$, \ where
$n_0 \ = \ N-(a_2-a_1)$. 	
\end{theorem}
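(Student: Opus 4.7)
The plan is to apply the reduction equation~(\ref{eq:reduction}) one step at a time, showing that each passage from level $N$ to level $N-1$ preserves all four ingredients of a $Pan$-type assertion (frequency mimicking on the subdomain, monotonicity, the lower bound $p_L(N)$, and the upper bound $p_U(N)$) while shrinking the subdomain by exactly one index at its upper end. Iterating the reduction $N-n$ times then delivers $Pan[a_1, a_2-(N-n), p_L(N), p_U(N)]$ for every $n$ down to $n_0 = N-(a_2-a_1)$, at which level the subdomain collapses to the singleton $\{a_1\}$; for any smaller $n$ the upper endpoint $a_2-(N-n)$ falls below $a_1$, so the assertion is vacuous.

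For the frequency mimicking core, I would substitute $p_{a,N}=a/N$ and $p_{a+1,N}=(a+1)/N$ into (\ref{eq:reduction}) and compute
\[
p_{a,N-1} \;=\; \frac{a/N}{1 - (a+1)/N + a/N} \;=\; \frac{a/N}{(N-1)/N} \;=\; \frac{a}{N-1}.
\]
This collapse requires both $a$ and $a+1$ to lie in $[a_1,a_2]$, so the new frequency mimicking interval at level $N-1$ is precisely $[a_1, a_2-1]$. For monotonicity, writing $x_a \equiv p_{a,N}$, the numerator of $p_{a+1,N-1}-p_{a,N-1}$ (obtained from (\ref{eq:reduction}) over the common positive denominator) factors as
\[
(x_{a+1}-x_a)(1-x_{a+1}) \;+\; x_a\,(x_{a+2}-x_{a+1}),
\]
which is non-negative termwise whenever $\{x_a\}$ is non-decreasing. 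For the boundary bounds, (\ref{eq:reduction}) at $a=0$ gives $p_{0,N-1} = p_{0,N}/(1 - p_{1,N} + p_{0,N}) \geq p_{0,N} \geq p_L(N)$, because monotonicity makes the denominator at most $1$; at the top end, a parallel computation together with $p_{N-1,N} \leq p_{N,N}$ forces $p_{N-1,N-1} \leq p_{N,N} \leq p_U(N)$. Each one-step reduction therefore propagates the whole $Pan$-type assertion.

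The main obstacle is purely organizational: confirming that each successive reduction trims exactly one index from the top of the frequency mimicking interval (and nothing from the bottom), and that $n_0 = N-(a_2-a_1)$ is the sharp termination level. Once the single substitution above collapses (\ref{eq:reduction}) to $p_{a,N-1}=a/(N-1)$, the theorem follows by a straightforward induction on $N-n$, with the monotonicity identity and the two one-line boundary estimates handling the auxiliary conditions at each step.
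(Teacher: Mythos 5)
Your proposal is correct and follows essentially the same route as the paper's proof: iterated application of the reduction equation~(\ref{eq:reduction}), with the substitution $p_{a,N}=a/N$, $p_{a+1,N}=(a+1)/N$ collapsing to $p_{a,N-1}=a/(N-1)$ and trimming one index from the top of the mimicking interval at each step until the singleton $\{a_1\}$ is reached at $n_0=N-(a_2-a_1)$. Your monotonicity identity is the same fact the paper establishes (organized as a sum of non-negative terms rather than a product of ordered factors), and your direct derivation that $p_{0,N-1}\geq p_{0,N}\geq p_L(N)$ and $p_{N-1,N-1}\leq p_{N,N}\leq p_U(N)$ is a slightly sharper treatment of the boundary bounds than the paper's compatibility check, but the argument is in substance the same.
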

\begin{proof} The assertion of
	$PaN[a_1,a_2,p_L(N),p_U(N)]$ amounts to the assertion of
	$(a_2-a_1+1)$ distinct frequency mimicking conditional
	probabilities, $p_{a,N} = a/N$ for integers $a \in [a_1, a_2]$. The
	application of the reduction equation (\ref{eq:reduction}) to each adjacent pair of
	these, $(p_{a,N}, p_{a+1,N})$, yields $(a_2-a_1)$ frequency
	mimicking probabilities at the level of $(N-1)$ conditioning events:
	$p_{a,N-1} = a/(N-1)$ for each integer $a \in [a_1,a_2-1]$.
	Repeating such reductions sequentially applied to these values of
	$p_{a,N-1}$ yields similar FM probabilities at the next lower level:
	$p_{a,N-2} = a/(N-2)$ for integers $a \in [a_1, a_2-2]$.  Continuing
	such  reductions iteratively $(a_2-a_1)$ times yields a final single
	mimicked frequency as
	$p_{a_1,n_0} = a_1/n_0$ at the smallest size
	of $n_0 = N - (a_2-a_1)$\vspace{.08cm}.\\
	\indent Furthermore, again applying the reduction equation (\ref{eq:reduction}) to the 
	monotonic non-decreasing values of adjacent pairs $(p_{a,N}, p_{a+1,N})$ 
	yields similarly monotonic 
	pairs of
	$p_{a,N-1}$ for the 
	next lower level of $N-1$\vspace{.06cm}. 
	The inequalities  $p_{a-1,N} \leq p_{a,N} \leq  p_{a+1,N}$  imply $p_{a-1,N} \leq p_{a,N-1}$ \vspace{.1cm} 
	because the latter is equivalent to $\frac{p_{a-1,N-1}}{1-p_{a,N}+p_{a-1,N}} \leq \frac{p_{a,N}}{1-p_{(a+1),N}+p_{a,N}}$ \vspace{.06cm} on the basis of (\ref{eq:reduction}).  
	This reduces to \vspace{.1cm} $p_{a-1,N}(1-p_{a+1,N}+p_{a-1,N}) \leq p_{a,N}(1-p_{a,N}+p_{a-1,N})$ and then by simple algebra to $p_{a-1,N}(1-p_{a+1,N}) \leq p_{a,N}(1-p_{a,N})$.
	\vspace{.06cm}
	Both of these paired factors are appropriately ordered, because 
	$p_{a-1,N} \leq p_{a,N}$ and $(1-p_{a+1,N}) \leq (1-p_{a,N})$ on account of the monotonicity inhering in the assertion of $PaN[a_1,a_2,p_L(N),p_U(N)]$ .\vspace{.06cm}\\
	\indent As to the lower and upper bounds for $p_{0,n}$ and $p_{n,n}$
	for these lower values of $n < N$, which remain specified as
	$p_L(N)$ and $p_U(N)$ in the Theorem, the only requirements for
	their coherency are that $p_{0,n} < p_{a_1,n}$ and $p_{n,n} >
	p_{a_2-(N-n), n}$.   In each such instance of n, the inequalities \
	$a_1/n > a_1/N > p_L(N)$ \ and \ $[a_2-(N-n)]/n < a_2/N < p_U(N)$
	are satisfied.  Thus, the specification of $p_L(N)$ and $p_U(N)$ as
	bounds for $p_{0,n}$ and $p_{n,n}$ are satisfactory.
	\ \ \ \ \ \ \ \ \ \ \ \ \ \ \ $\qed$ 
\end{proof}

Appendix 2 displays the structure of these implied reductions
geometrically in an insightful way.

 \noindent {\bf Numerical Example:}
On the basis of Theorem 3, $Pa100[25,60,.1,.7]$ implies the further
sequence of assertions $Pa99[25,59,.1,.7]$, $Pa98[25,58,.1,.7]$,
..., $Pa65[25,25,.1,.7]$. The range of frequencies that must be
mimicked by conditional probabilities diminishes as the number of
conditioning events, $n$, diminishes; and the lowest size of
conditioning observations that require frequency mimicking is
specifically limited to $n_0 \ = \ N-(a_2-a_1) = 100-(60-25) = 65$.
Notice that the size of $a_1$ remains fixed at $25$ throughout the
reduction process, while $a_2$ diminishes sequentially until it also
equals $a_1$. 
The only conditional probability that necessarily mimics a frequency based on
$S_{65}$ is $P(E_{66}|S_{65}=25) = 25/65$. 

 This numerical example
pertains specifically to the assertions we
have discussed for the apiary situation. 

The assertions of $PaN[a_1,a_2,p_L(N), p_U(N)]$ place limits on the
extent to which conditional probabilities {\it must} mimic
conditioning frequencies. If these $PaN$ assertions were augmented 
by further assertions of $p_{a,N}$ values that do {\it
not} mimic frequencies {\it outside} of the interval
$[\frac{a_1}{N}, \frac{a_2}{N}]$ such as the L, Q, W, or S completions 
shown in Figure~\ref{fig:panpicture}, 
then reduction
equation (\ref{eq:usualreduction}) could be applied sequentially to
the implied probability mass function to determine pmf vectors
for the sums of smaller numbers of sample observations as well.  While these reduced 
distributions are not formally FMD's, they have great practical 
interest, as we shall now see. 

\noindent {\bf An Example of Complete Reduction:}
Refer once again to Figure~\ref{fig:pdfpicturebig} to study the 
following example. Suppose
that the FMD assertions $Pa100000[25000, 60000, .1, .7]$ are
augmented by either Linear or Strict completions.  The two implied pmf vectors {\bf q}$_{100002}$ are virtually
indistinguishable, looking like a box with a convex curved top,
labeled $L_1,S_1$ in the lower half of
Figure~\ref{fig:pdfpicturebig}.  The two indistinguishable {\it
purely dashed} functions there, labeled $L_R,S_R$, depict the
cohering pmf's {\bf q}$_{102}$ for the sum of only 101 events that
have been {\it reduced} from these vectors {\bf q}$_{100002}$ via equation
(\ref{eq:usualreduction}). These resulting ``density'' functions for
$S_{101}$ appear pleasingly regular. They are more concentrated than
the linear and strict FMD's $L_2$ and $S_2$ specified
directly via $Pa100[25,60,.1,.7]$.  These two pmf's are displayed on the same graph
as {\it dash-dot-dash} functions merely for comparison. 

\indent The pmf's $L_R$ and $S_R$ for this reduced distribution
display several interesting features. Although Theorem 3 assures
that frequency mimicking distributions are implied on sizes of $n$
only as low as $n_0 \equiv N -(a_2-a_1) = 100000-(60000-25000) =
65,000$, the reduced distribution for $S_{100}$ is also {\it very
nearly frequency mimicking} over most of the interval
$[\frac{a_1}{N}, \frac{a_2}{N}] = [.25, .60]$. Look at the purely
dashed conditional functions $L_R$ and $S_R$ in the upper panel of
Figure~\ref{fig:pdfpicturebig}. Frequency mimicking is almost exact
(to the resolution of the eye) over the interval $(.34, .51)$ and is
not far out of line anywhere over the interval $[.25, .60]$. Notice
also in the Figure that the lower and upper bounds on $p_{a,100}$
for this reduced conditional probability function have shifted to $.25$ and $.60$
\ (which equal the values of $\frac{a_1}{N}$ and $\frac{a_2}{N}$)   from the original assertion values of 
$p_L(100000) = .1$ and $p_U(100000) = .7$  respectively. Moreover,
the reduced functions $L_R$ and $S_R$ appear much less severe than
the two indistinguishable ``densities'' for $S_{100000}$, having
lost their bimodality. 

\indent All in all, this computation is pleasing.  The assertions $Pa10^5[25000,60000,.10,.70]$ determine a pmf {\bf q}$_{100002}$ that
has sensible implications both for inference on the basis of $100$
conditioning events and for opinions about the sum of $101$ events.
$\ \ \ \ \ \ \ \ \ \ \ \ \ \ \ \ \diamond$

\subsection{A cautionary result on extensions}\label{sect:acautionary}
If you are willing to assert $PaN[a_1, a_2, p_L(N), p_U(N)]$ for
some size of N, it would not seem surprising that you may like to
assert a frequency mimicking conditional probability for larger
sizes of N, too, especially when the conditioning frequency still
lies within the interval $[\frac{a_1}{N}, \frac{a_2}{N}]$. We shall
now see that this would surely be
coherent, and you may judge this to be appropriate. However, if you do wish to extend your FM assertions even the
smallest bit in this way for the ``next value of N'', coherency {\it
forces you to extend} your predictive probabilities as frequency
mimicking assertions {\it outside of the interval} $[a_1/N, a_2/N]$
as well. Theorem 4 makes this coherency condition explicit for
``the next value of $N$''.


\begin{theorem}\label{thm:fourth}
Suppose that a further event, $E_{N+2}$ is appended to the vector of events ${\bf E}_{N+1}$ that are regarded exchangeably.  It is coherent to extend the assertions
entailed in $PaN[a_1, a_2, p_L(N), p_U(N)]$ to include the further
assertion of $P(E_{N+2}|S_{N+1} = a) = a/(N+1)$ for any specific integer value of $a$ for which $a/(N+1)$ is within the interval $[a_1/N, a_2/N]$.  However, coherency then also
requires the frequency mimicking assertions of $p_{a,(N+1)} = a/(N+1)$ {\it for every integer} value of $a$ within the interval
$[a_1, a_2+1]$.  
\end{theorem}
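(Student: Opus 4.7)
My plan is to use the reduction equation (\ref{eq:reduction}$'$) as an algebraic lever: it ties each conditional probability at level $N$ to the adjacent pair at level $N+1$, and solving it in two different ways lets us propagate the frequency-mimicking identity in both directions along the $a$-axis.

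First I would verify two algebraic identities that flow from (\ref{eq:reduction}$'$). Solving for $p_{a,N+1}$ gives $p_{a,N+1} = p_{a,N}(1-p_{a+1,N+1})/(1-p_{a,N})$, and solving for $p_{a+1,N+1}$ gives $p_{a+1,N+1} = 1 - p_{a,N+1}(1-p_{a,N})/p_{a,N}$. A direct substitution with $p_{a,N}=a/N$ shows that (i) if $p_{a+1,N+1}=(a+1)/(N+1)$ then $p_{a,N+1}=a/(N+1)$, and (ii) if $p_{a,N+1}=a/(N+1)$ then $p_{a+1,N+1}=(a+1)/(N+1)$. These are the two engines that move the frequency-mimicking property one index at a time, provided that the corresponding $p_{a,N}$ is in the asserted mimicking range $[a_1,a_2]$.

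Next I would pin down the integer range of permissible $a^{\ast}$. The hypothesis $a^{\ast}/(N+1)\in[a_1/N,a_2/N]$ rewrites as $a_1(N+1)/N\le a^{\ast}\le a_2(N+1)/N$; since $0<a_1,a_2<N$, the integers in this range are precisely $a_1+1\le a^{\ast}\le a_2$. Now I use identity (i) as a backward step: starting from the assumed $p_{a^{\ast},N+1}=a^{\ast}/(N+1)$ and the fact that $a^{\ast}-1\in[a_1,a_2-1]\subset[a_1,a_2]$ (so $p_{a^{\ast}-1,N}=(a^{\ast}-1)/N$ is already asserted), identity (i) forces $p_{a^{\ast}-1,N+1}=(a^{\ast}-1)/(N+1)$. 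Iterating backward while the index remains within $[a_1,a_2]$ produces $p_{a,N+1}=a/(N+1)$ for every $a\in[a_1,a^{\ast}]$. Symmetrically, identity (ii) applied with $a=a^{\ast}\in[a_1,a_2]$ yields $p_{a^{\ast}+1,N+1}=(a^{\ast}+1)/(N+1)$; iterating forward while the lower index remains in $[a_1,a_2]$ fills in $p_{a,N+1}=a/(N+1)$ for every $a\in[a^{\ast},a_2+1]$. Together these give the conclusion that frequency mimicking is forced on the whole interval $[a_1,a_2+1]$.

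Finally, for the coherence half of the statement, I would check that the single extending assertion is not in itself inconsistent with $PaN[a_1,a_2,p_L(N),p_U(N)]$. This amounts to exhibiting at least one coherent completion of ${\bf p}_{N+2}$, which is achieved by decreeing $p_{a,N+1}=a/(N+1)$ throughout $[a_1,a_2+1]$ and choosing the remaining components monotone non-decreasing between $p_L(N)$ and $p_U(N)$; equation (\ref{eq:reduction}$'$) then reduces this specification back to the original $PaN$ assertions by the very identities derived above, and equation (\ref{eq:intrigue}) produces a valid pmf ${\bf q}_{N+3}$ in the unit simplex. The main obstacle I anticipate is bookkeeping at the boundaries, namely verifying carefully that the integer bounds on $a^{\ast}$ ensure both the backward and forward iterations can actually start inside $[a_1,a_2]$; once that is nailed down the algebra is routine.
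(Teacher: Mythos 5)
Your proof is correct and takes essentially the same route as the paper: both propagate the frequency-mimicking identity upward and downward along $a$ by the two rearrangements of the reduction equation (\ref{eq:reduction}$'$), iterating until the endpoints $a_1$ and $a_2+1$ are reached. Your explicit pinning down of the admissible integers $a^{\ast}$ and your sketch of a coherent completion are slightly more careful than the paper's treatment, but they do not alter the argument.
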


\noindent {\bf Comments:} \ Notice firstly that this implication extends the
FM interval based on (N+1) events to be wider than that based on N
events, because\\ 
\hspace*{2.7cm}$a_1/(N+1) \ < \ a_1/N \ < \ a_2/N \
< \ (a_2+1)/(N + 1)\;$.\\  
\noindent Thus, the subdomain of the
FMD's within $(0,1)$ is extended from the interval $[\frac{a_1}{N},
\frac{a_2}{N}]$ to
$[\frac{a_1}{N+1}, \frac{a_2+1}{N+1}]$. Secondly, if an extension of the lower bound $p_L(N+1)$ were entertained as well, it would need to be specified at a level not exceeding $a_1/(N+1)$.  A similar qualification would pertain to any assertion of $p_U(N+1)$.  Such bounding assertions would complete a full
assertion in the frequency mimicking form $Pa(N+1)[a_1, a_2+1,
p_L(N+1), p_U(N+1)]$.  These qualifications regarding further assertions of $p_L(N+1)$ and $p_U(N+1)$ derive from the presumptions that the assertion values $p_{a,(N+1)}$ are nondecreasing with the size of $a$. 
\begin{proof}
Equations (\ref{eq:usualreduction}) and (\ref{eq:reduction}) hold for any value of
$N$ for which $N+1$ events are regarded exchangeably.  If the assertion $PaN[a_1, a_2, p_L(N), p_U(N)]$ is extended so that $N+2$ events are regarded
exchangeably, and the FM assertion $p_{a,N+1} = a/(N+1)$ is added to
those of $PaN[a_1,a_2,p_L(N),p_U(N)]$, then equation (\ref{eq:reduction}) would
expand to the requirement of $(\ref{eq:reduction}')$ which would now be viewed as an
extension requirement:
\[
p_{a,N} \ \ = \ \ \frac{p_{a,N+1}}{
	1 - p_{a+1,N+1} + p_{a,N+1}} \ ,
\ \ \ \ {\rm for} \ \ a \ = \ 0, 1, ..., N\ .  \tag{\ref{eq:reduction}$'$}
\label{eq:reductiononehigher}
\]
\noindent Consider the value of $a \in [a_1, a_2]$ for which the
frequency mimicking extension is proposed. Inserting into equation
$(\ref{eq:reduction}')$ the values of $p_{a,N} = a/N$ and $p_{a,N+1} = a/(N+1)$
according to the conditions of Theorem 4 yields $p_{a+1,N+1} =
(a+1)/(N+1)$, a frequency mimicking assertion as well.  Continue
iteratively with this procedure for successive values of $a$ and $a+1$
until arriving at the implication $p_{a_2+1,N+1} =
(a_2+1)/(N+1)$.\vspace{.2cm}\\
\indent Similarly, equation $(\ref{eq:reduction}')$ can be written in a downward
direction with respect to ``$a$'':
\[
p_{a-1,N} \ \ = \ \ \frac{p_{a-1,N+1}}{
	1 - p_{a,N+1} + p_{a-1,N+1}} \ ,
\ \ \ \ {\rm for} \ \ a \ = \ 1, 2, ..., N+1\ .  \tag{\ref{eq:reduction}$''$}
\label{eq:reductiononelower}
\]
Now insert the values of $p_{a-1,N} = (a-1)/N$ and $p_{a,N+1} =
a/(N+1)$ here to yield the  result $p_{a-1,N+1} =
(a-1)/(N+1)$. Repeat such insertions sequentially until arriving at the implication $p_{a_1,N+1} =
a_1/(N+1). \ \ \ \ \ \ \ \ \ \ \ \qed$
\end{proof}
\noindent  A geometrical exposition of the content of Theorem \ref{thm:fourth}
provides further insight in Appendix 3, continuing the analysis reported in
Appendix 2.

\subsection{Reasons for caution are corollary}\label{sect:5.3reasons}

It may seem appealing to augment a group of $PaN[a_1, a_2, p_L(N),
p_U(N)]$ assertions even more expansively.  Suppose you assert
$Pa100[25,60,.1,.7]$, i.e., frequency mimicking conditional
probabilities for $N = 100$, with $a_1 = 25$ and $a_2 = 60$, with
lower and upper bounds on $p_{0,100}$ and $p_{100,100}$ as $.1$ and
$.7$. Would you not then also want to assert similarly frequency
mimicking probabilities $P(E_{N+K+1}|S_{N+K} = a) = a/(N+K)$ for
{\it any} $K
> 0$ {\it and for every value of} $a$ {\it for which} $a/(N+K)$ lies
within the rational interval $[\frac{a_1}{N}, \frac{a_2}{N}]$?
Although such a general extension may seem reasonable, a Corollary
to Theorem \ref{thm:fourth} tells us that
coherency would force you into further frequency mimicking
assertions even
more extensive (over much wider intervals) than you might wish to bargain for, at least for large values of $(N+K)$. These exhibit themselves in their implications for the pmf vector {\bf q}$_{N+K+2}$.
\begin{corollary}\label{cor:first}
Assertions of 
$PaN[a_1,a_2, p_L(N), p_U(N)]$ can be extended coherently to FMD's over $(N+K)$ events for any $K$ by augmenting them with 
assertions of $P(E_{N+j+1}|S_{N+j} = a) = a/(N+j) \in [a_1/N,
a_2/N]$ \; for any positive integer values of $j = 1, 2, ..., K$.
For any such $K$, coherency then requires 
frequency mimicking assertions $p_{a,N+K+1}$ over the wider interval $[a_1/(N+K), (a_2+K)/(N+K)]$.
\end{corollary}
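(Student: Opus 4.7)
The plan is to prove the corollary by induction on $K$, applying Theorem \ref{thm:fourth} once at each step. The base case $K=1$ is exactly the content of Theorem \ref{thm:fourth}: augmenting $PaN[a_1, a_2, p_L(N), p_U(N)]$ with a single FM assertion $p_{a,N+1}=a/(N+1)$ whose value lies in $[a_1/N, a_2/N]$ forces $p_{a,N+1}=a/(N+1)$ for every integer $a\in[a_1, a_2+1]$, so the enforced FM subdomain widens from $[a_1/N, a_2/N]$ to $[a_1/(N+1), (a_2+1)/(N+1)]$.

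For the inductive step, I would assume that, after the first $K-1$ augmentations, coherency has already forced $Pa(N+K-1)[a_1, a_2+K-1, \cdot,\cdot]$. I then apply Theorem \ref{thm:fourth} with $N$ replaced by $N+K-1$ and with the FM interval $[a_1, a_2]$ replaced by $[a_1, a_2+K-1]$. For this to be legitimate I must check that the newly added assertion $p_{a,N+K}=a/(N+K)$, whose index satisfies $a/(N+K)\in[a_1/N, a_2/N]$ by the hypothesis of the corollary, in fact lies in the interval demanded by Theorem \ref{thm:fourth} at this level, namely $[a_1/(N+K-1), (a_2+K-1)/(N+K-1)]$. This containment is the only step that requires any algebra.

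To verify it, the lower inequality $a/(N+K)\ge a_1/(N+K-1)$ follows by cross-multiplication from $a/(N+K)\ge a_1/N$ together with $a_2\le N$, since $a_1(N+K)\le a_1(N+K-1)\cdot(N+K)/N$ is equivalent to the trivially true $a_1\le N+K-1$. Dually, the upper inequality $a/(N+K)\le (a_2+K-1)/(N+K-1)$ reduces by clearing denominators to $a_2(K-1)\le N(K-1)$, which holds because $a_2\le N$. With containment secured, Theorem \ref{thm:fourth} delivers $p_{a,N+K}=a/(N+K)$ for every integer $a\in[a_1, a_2+K]$, i.e., $Pa(N+K)[a_1, a_2+K, \cdot,\cdot]$, closing the induction. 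The enforced FM subdomain at the final level is therefore $[a_1/(N+K), (a_2+K)/(N+K)]$, which expands monotonically in $K$.

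I do not anticipate any real obstacle here: the content of the corollary is that Theorem \ref{thm:fourth} may be iterated, and each iteration merely widens the forced FM interval by one integer unit on the right. The only subtlety—and the single place where care is needed—is confirming that the input to Theorem \ref{thm:fourth} at step $j$ still lies within the FM interval that the previous $j-1$ steps have already enforced; the containment computation above handles this uniformly for all $j\le K$.
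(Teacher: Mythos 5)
Your proposal is correct and matches the paper's own argument, which likewise proves the corollary by a finite iterated application of Theorem \ref{thm:fourth}, noting that $a_1$ stays fixed while the effective ``$a_2$'' grows by one at each step. The only difference is that you explicitly verify the containment $[a_1/N, a_2/N]\subseteq[a_1/(N+K-1),(a_2+K-1)/(N+K-1)]$ needed to legitimize each application of the theorem, a detail the paper leaves implicit (it is mentioned only in the comments following the corollary).
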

\begin{proof}
 This result derives simply from a finite
iterative application of Theorem \ref{thm:fourth}.  At each step of increasing
values of $K$, $a_1$ remains fixed at $a_1$, whereas the applicable
value of
``$a_2$'' increases by $1$, eventually to $a_2+K$. $\ \ \ \qed $ 
\end{proof}
\noindent {\bf Comments:}  Notice firstly that for every value of
$K$, both endpoints of the implied frequency mimicking interval, $a_1/(N+K)$ and $(a_2+K)/(N+K)$, lie outside of the asserted
frequency mimicking interval $[\frac{a_1}{N}, \frac{a_2}{N}]$.
Moreover, they continue to move further away from these bounds and
even approach the open unit-interval $(0,1)$ as K increases.
Secondly, in cases for
which the extension of FMD's might be desirable, it would be natural
to assert broader bounds $p_L(N+K) \leq a_1/(N+K)$ and $p_U(N+K) \geq (a_2+K)/(N+K)$ as well, appropriate to
the more extreme conditions to which they pertain, i.e., $(S_{N+K}=0)$ and
$(S_{N+K}=N+K)$, respectively. 

Understanding the full weight of the implications stated in
Corollary \ref{cor:first} comes from studying the limiting distribution of the
proportion of successes as the extension number $K$ increases. We
shall discuss this issue in the next subsection, providing an
insight into the nature of finitely additive distributions that
exhibit adherent masses. 

\subsection{The limit of distributions for the proportion}\label{sect:5.4limit}

Exchangeable distributions are most widely known
on account of de Finetti's representation theorem.  It says that if a
sequence of events $E_1, ..., E_{N+1}$ is regarded exchangeably and
as infinitely exchangeably extendible, then for any $a$ and $N$, \vspace{.2cm}\\
\noindent \hspace*{.2cm}
$P(S_{N+1} = a) \ = \ P(\overline{S}_{N+1} = \frac{a}{N+1}) \ =
\left({N+1\atop a}\right)\; \int_0^1 \theta^a\; (1-\theta)^{N+1-a}\;
dM(\theta)$\ \vspace{.2cm},\\
\noindent where \ $M(\theta) \; = \; \lim_{K\rightarrow\infty} M_{\overline{S}_{N+K}}(\theta)  $\ 
for some sequence of finitely additive distributions
\{$M_{\overline{S}_{N+K}}(^.)$\}$_{K=1}^\infty$.\ \;
See \citet*{Heath76}, \citet*[pp. 87--89]{Landenna86} or \citet*[pp. 207--209]{Lad96}.
\citet{Diaconis80}
noted that even if the distribution is
exchangeably extendible only to N+K, then for some such distribution
$M(^.)$ this mixture representation differs from the actual value of
$P(S_{N+1} = a)$ by at most $4N/(N+K)$ for any $a$.  In
practice, the mixing distribution $M(^.)$ in the representation
theorem (or the ``prior distribution for $\theta$'' as it is commonly referred to) is
meant to represent one's initial opinions about the relative
frequency of success in an arbitrarily large sequence of events that
one would regard exchangeably with the
events composing {\bf E}$_{N+1}$. 

The important distributions in all real problems of
practice are the finite members of the sequence
\{$M_K({\bar{S}_{N+K}})$\}, {\it not} the limit of this sequence.  Nonetheless, in the 
context of extendible frequency mimicking distributions
over a limited domain, we can state precisely what happens to the
{\it limiting distribution of the frequencies} $\bar{S}_{N+K}$ as K increases. (Note again, this is something different from the distribution of the limit of the frequencies.)  Rather than continuing with formalities of theorem and proof, we shall discuss the development of the sequence informally here to develop intuitions.  Formalities and proofs are deferred to the Supplementary Materials available for this article.  These amount to a formalization of Sections \ref{sect:5.4limit}  and \ref{sect:5.5limitfmd} of this article.

Further to the conditions of Theorem \ref{thm:fourth} and Corollary \ref{cor:first}, which imply
the assertions of \ $Pa(N+K)[a_1, K+a_2, ^., ^.]$ for every $K
\geq 1$, \; any probability mass function denoted by {\bf q}$_{N+K+2}$
is restricted to have only $a_1 + (N - a_2)$ free components. As the
value of $K$ increases, the tendency in all agreeing distributions
is for virtually all the mass in the vector {\bf q}$_{N+K+2}$ to
settle essentially on two points, $S_{N+K+1} = a_1-1$ and $S_{N+K+1}
= a_2+K+1$.  When divided by $(N+K+1)$, the positions of these two points
of amassment converge toward $0$ and $1$ as
$K$ increases. This is the source of the ``adherent masses'' at $0$
and $1$ in the limiting distribution for the sequence of finitely
additive distributions.

Thus, the limit of this sequence of
distributions for $\bar{S}_{N+K}$ is an unusual one.  It is improper
and finitely additive, assigning probability $0$ to the points $0$
and $1$ {\it and} to every open interval {\it strictly} within $(0,1)$.  Yet the total mass of $1$ becomes stuck onto the endpoints of the unit interval. In this way it
exhibits what 
\citet{deFinetti1949,deFinetti55}
called ``adherent masses'' at $0$ and $1$.  For the limiting distribution function allows the concomitant feature that  $P(0,a)+P(b,1) = 1$ for any $0 < a < b < 1$.  
(Recall our Definition  \ref{def:first} near the end of Section \ref{sect:intro} of this article.)  

Formal algebraic details and a statement of the relevant theorem can be read in the supplementary materials available in Appendix $4$ of this article.
We would best conclude this discussion with a computational graphic example.

\begin{figure}[!ht]
\begin{center}
\includegraphics[width=1\linewidth]{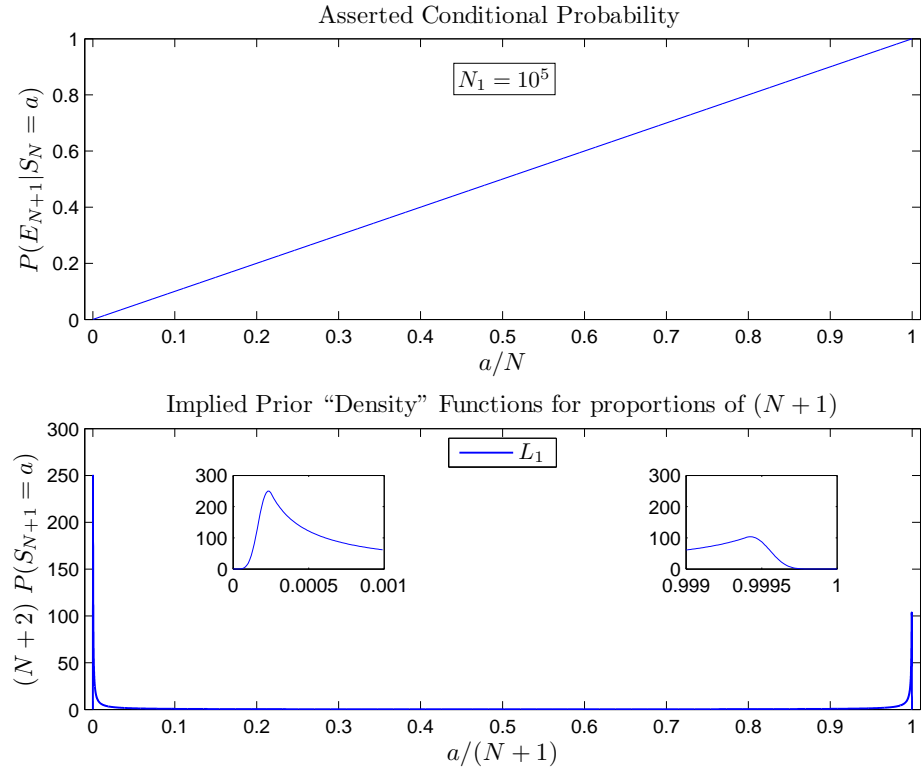}
\caption{The lower half of this Figure displays the
``density'' for $\bar{S}_{100001}$ implied by frequency
mimicking assertions for $N = 100$ and $N+K = 10^5$ along with the specifications
of $a_1 = 25, a_2+K = 99960, p_L(N+K) = .00012, p_U(N+K) = .99998$, and Linear completion probabilities, $L_1$.
This exemplifies the implications of
extending the frequency mimicking assumptions from $N = 100$ to $N+K = 10^5$ as long as the frequency is within the
interval $[.25, .6]$, which were formalised in Theorem 4.  The upper half of this
figure shows that frequency mimicking is now required over virtually
the entire unit-interval.  The lower half displays the peaks in the
density now amassed near to $\frac{24}{100001}$ and $\frac{99961}{100001}$. The relative heights of the peaks displayed in this Figure and the relative sizes of the adherent masses at $0$ and $1$ do depend on the completion formulation, presumed here as Linear.}
\label{fig:pdfpictureextrabig}
\end{center}
\end{figure}

\noindent {\bf A computational example:}  We can exemplify the
scenario developed in this discussion graphically. Suppose that frequency
mimicking is asserted initially for $N$ as low as 100, say
$Pa100(25,60,.1,.7)$ as in our apiary example, {\it and then
extended} to any larger $N+K$ as large as $10^5$ as long as
$a/(N+K)$ were within the interval $[.25, .60]$ along with specified
bounds $p_L(100001)=.00012$ and $p_U(100001) = .99998$. Theorem \ref{thm:fourth} and
Corollary \ref{cor:first} then imply frequency mimicking over a much wider
interval, $[.00025, .99940]$, which is almost over the entire unit
interval. Figure~\ref{fig:pdfpictureextrabig} exhibits the
implications for the distribution $M(\bar{S}_{100001})$.  The lower
graph of the density in Figure~\ref{fig:pdfpictureextrabig} shows how the  endpoint spikes of agglutinating mass develop when $N+K$
increases to $10^5$ from $100$.  For the scale of N and K in this
Figure, the peaks of the ``density'' for $\bar{S}_{N+K+1}$ occur at
24/100001 and at 99961/100001.  This feature can be appreciated in the algebraic detail presented in the Supplementary Materials to this article. Notice that the
displayed density no longer seems appropriate to intuitions about a
problem like the beehive problem which motivated the developments of
this article.  The beekeeper would probably {\it not} want to assert
such probabilities for the proportion of hives with queen cells in
such a large population of hives.  It appears much more appropriate
to assert merely $Pa100000[25000,60000,.1,.7]$ and to reduce this
distribution to the distribution implied for inference based on, say
$N = 100$ (as seen in Figure \ref{fig:pdfpicturebig}) than to assert $Pa100[25,60,.1,.7]$
and then to extend this assertion so to honour frequency mimicking for
$N+K$ as high as $10^5$.  Ultimately then, with such an attitude, frequency mimicking for $N$ as small as $100$ is only approximate.  However, it surely is visually apparent over a meaningful interval according to Berkson's ``interocular traumatic test'', the scale of the eyeball touted by Savage.  See Edwards, Lindman and Savage (1963, p. 217).

\subsection{The limit of FMD's over a constrained interval}\label{sect:5.5limitfmd}

Having reached this conclusion about the applicability of FMD's to finite population problems, it is intriguing to investigate the limiting distribution for the family of finite distributions agreeing with the assertions $PaN[a_1,a_2,p_L(N),p_U(N)]$ for any $N$.  Rather than specifying an FMD for some size of $N$ and then extending it, taking the consequences of broader and broader frequency mimicking intervals with the extension, suppose we fix the limit of a frequency mimicking interval and study the limit of discrete FMD's that honour this specific interval for growing sizes of $N$.

Specifically, consider the limit of distributions that respect frequency mimicking behaviour over the largest rational interval within a constant real interval $[\theta_1,\theta_2]$ as the size of N increases.  While an algebraic derivation is again left to the Supplementary Materials, Figure \ref{fig08} will assist one in intuiting the following result:  that at least in the cases of Strict or Linear augmentations of the FMP specifications outside a real interval $[\theta_1, \theta_2]$, the limit distribution of the family of FMD's is a 4-parameter Incomplete Beta mixture of Binomial distributions, with Incomplete Beta mixing parameters $(\theta_1, \theta_2, 0, 0)$.

To exemplify this result, Figure \ref{fig08} displays an array of ``densities'' for $\bar{S}_N$ deriving from assertions of $PaN[a_1(N,.2), a_2(N,.6), p_L(N)=.1, p_U(N)=.8]$ for the values of $N = 100$, $N = 1000$, and $N = 10^5$.  The designations of $.2$ and $.6$ in the specifications of $a_1(N,.2)$ and $a_2(N,.6)$ specify $a_1$ to be the smallest value of $a_1$ for which $a_1(N,.2)/N \geq .2$, while $a_2$ is the largest value of $a_2$ for which $a_2(N,.6)/N \leq .6$.  The Figure also exhibits the limit of such distributions as N increases, via the $Incomplete Beta (.2, .6, 0, 0)$ density.  

Recall that while a $Complete \ Beta(\alpha, \beta)$ density function for $\theta$, which is proportional to $\theta^{\alpha-1}(1-\theta)^{\beta-1}$ over $\theta \in [0,1]$, allows only parameters $\alpha > 0$ and $\beta > 0$ for proper integration, the four parameter Incomplete Beta  density over $\theta \in (\theta_1,\theta_2)$ strictly within $(0,1)$ integrates naturally when $\alpha = 0$ and $\beta = 0$.  In such a case the proportionality constant for the density equals $\{log[\theta_2/(1-\theta_2)] - log [\theta_1/(1-\theta_1)]\}^{-1}$.  Although this density is zero outside the interval $(\theta_1,\theta_2)$, when it mixes corresponding Binomial distributions as prescribed by exchangeability, the mixture allows positive probabilities for appropriate rational values of the average successes across the entire spectrum of rationals within $[0,1]$.  

\begin{figure}[tbph]
\centering
\includegraphics[width=0.9\linewidth]{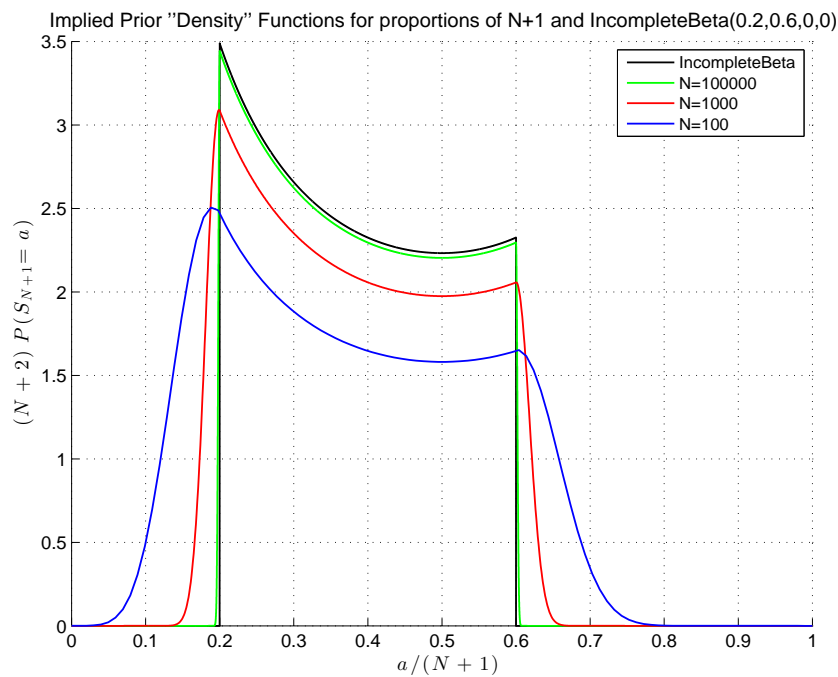}
\caption{Finite discrete frequency mimicking ``densities'' for a proportion, in the form of $PaN[a_1(N,\theta_1=.2), a_2(N,\theta_2=.6), p_L(N) = \theta_1/2 = .1, p_U(N) = 1-\theta_2/2 = .8$], \ for $N = 100, 1000,$ and $10^5$, along with the limiting density $Incomplete Beta (.2, .6, 0, 0)$ }
\label{fig08}
\end{figure}

\section{Concluding Comments}\label{sect:conclusions}

Emerging from the trees to view the forest, we can summarise the conclusions of this analysis.  In the context of FMD's specified by  $PaN[a_1,a_2,p_L(N),p_U(N)]$ predictive probabilities have been shown to be eminently applicable to inference from sampling in finite population problems when  $N+1$ is the total population size or smaller.  Such assertions do specify precise frequency mimicking assertions for a range of smaller values of N too, as well as virtual FMD's for even smaller values of N outside this range.  Moreover, the implications of such assertions for the distributions of observed frequencies of any number of observations have been specified.  However, the infinite extension of FMD assertions within the same interval as $[a_1/N, a_2/N]$ provokes typically unappealing conclusions.  Nonetheless, these are mathematically interesting for exhibiting a procedure for constructing finitely additive distributions that exhibit agglutinated masses, long recognised as an intriguing subject.  Finally, limiting distributions of the family $PaN[a_1(\theta_1,N),a_1(\theta_1,N),p_L(\theta_1,N),p_U(\theta_2,N)]$  have been derived which may be applicable to infinitely exchangeably extendible sequences. These limits are identified as Incomplete Beta mixtures of Binomial probabilities for the ``Linear'' and ``Strict'' subfamilies of FMD's. 

Throughout this article we have focused on the implications of FMD distributions for predictive probabilities for ``the next event to be observed'', identifying meaningful results for any size of $N$.  The structure
of the analysis can be applied  to other sensible quantities as well.  For example, the same computational strategy can produce inferential probabilities for the sum of the population characteristics conditioned on the sum of the sample characteristics, in the form of $P[(S_{N+1}=A)|(S_n=a)]$.

\begin{acknowledgements}
We have been supported, with
thanks, by a travel grant from University of Rome ``La Sapienza'',
by a Ph.D. grant from University of Naples ``Federico II'', and by a
travel grant from the University of Palermo.  Thanks to James O'Malley, Wes
Johnson, Val Johnson, and Andrea Piesse for helpful comments on 
earlier drafts of this article during the many years of its development, 
and to Jay Kadane for a helpful reference.
\end{acknowledgements}

\bibliographystyle{spbasic}        
\bibliography{bibliomimic}   

\newpage
\appendix
\section*{Appendix 1. Sensitivity of Weak extreme
distributions to $p_L$ and $p_U$}

\begin{figure}[!ht]
\begin{center}
\includegraphics[width=1\linewidth]{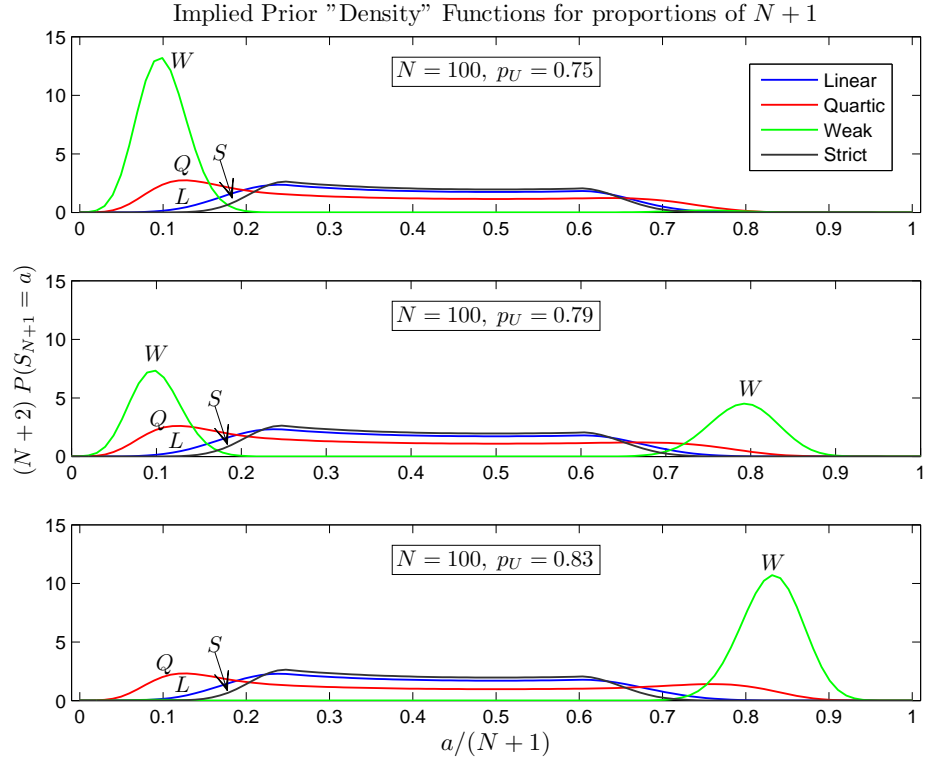}
\caption{The three panels of graphs all display mass functions for
$S_{102}$ cohering with the specifications of $N = 100$ along with
$p_L(100) = 0.10,\ \frac{a_1}{N} = 0.25$, and $\frac{a_2}{N} = .60$.
Only the specification of $p_U(100)$ varies between the displays:\ \
for the top display, $p_U(100) = .75$;\ \ for the middle display,
$p_U(100) = .79$;\ \ while for the bottom display, $p_U(100) =
.83$.}
\label{fig:pdfpicturenonrobust}
\end{center}
\end{figure}
 \indent Figure \ref{fig:pdfpicturenonrobust} displays an unusual
sensitivity of the Weak Extreme mass function vector {\bf q}$_{102}$ to the specification of
$p_U(N)$ in the assertions of $Pa100[a_1,a_2,p_L(100),p_U(100)]$.  The Figure shows three panels of mass functions specified exactly as in the top half of our maintext 
Figure~\ref{fig:pdfpicture} where $N=100$, but with the value 
of $p_U(100)$ changed sequentially from .70 in Figure ~\ref{fig:pdfpicture} to .75, .79
and .83 in the panels of Figure~\ref{fig:pdfpicturenonrobust}. With
these very mild changes in specification for the upper bound on
$P(E_{N+1}|S_N=100)$, the mass function for the
Weak extreme specification shifts dramatically across the spectrum
of the abscissa, shifting the location
of its main support from the interval $(.04, .18)$  to the interval
$(.75, .95)$.  Comparatively, this change in specification of
$p_U(100)$ has little effect on the distributions associated with
the Q, L and S functions, though Q has fattened to a noticeable
extent. Further
experimentation shows that it is the {\it relative
sizes} of $p_L(N)$ and $p_U(N)$ that drives the sensitivity.
Remember that the ``Weak'' specification of conditional
probabilities is an extreme distribution among FMD distributions.
\section*{Appendix 2.  A geometrical exposition of Theorem \ref{thm:third}}

\noindent This appendix presents a geometrical exposition of the
proof of Theorem \ref{thm:third} as it applies in the context of a single special
case. The algebraic content of the general proof in Section \ref{sect:reductive}
should then become intuitive. Understanding the detail will require
some serious attention, but we believe it will be worth it.  Let us
begin by restating Theorem \ref{thm:third} as it pertains to a special
case we shall illustrate, of $N = 8$: 

\noindent When $9$ events are regarded exchangeably, the assertion
of $Pa8[2,5, p_L(8),p_U(8)]$ \ implies via coherency the concomitant
assertions of 
 $Pa7[2,4,p_L(8),p_U(8)]$, $Pa6[2,3,p_L(8),p_U(8)]$,
and $Pa5[2,2,p_L(8),p_U(8)]$. 

The context of the following geometrical exposition has been introduced
previously in an article by Lad, Deely and Piesse (1995, pp.
200-201) which we review briefly now. Suppose that a conditional
probability $p_{a,N}$ is represented algebraically by a parametric
convention
\begin{equation}
p_{a,N} \ = \ \frac{a + \alpha_{a,N}}{N + \alpha_{a,N} +
\beta_{a,N}} \ \ \ , \label{eq:parametricpaN}
\end{equation}
for some pairs of positive numbers $(\alpha_{a,N}, \beta_{a,N})$.
This is an obvious generalisation of the parametric representation
of conditional probability from a Beta-Binomial mixture, or Polya
distribution. For that special case, the values of $\alpha_{a,N}$
and $\beta_{a,N}$ are fixed constants for all $a$ and $N$. Algebraic
transformation of equation (\ref{eq:parametricpaN}) shows that when
this more general equation holds, the conditional probability value $p_{a,N}$ can
be represented {\it by a specific line of such pairs}
$(\alpha_{a,N}, \beta_{a,N})$ via the expression \begin{equation}
\beta_{a,N} \ = \ -(N-a) \ + \ (a +
\alpha_{a,N})\;(\frac{1-p_{a,N}}{p_{a,N}}) \ \ \ .
\label{eq:parametricpaNline}
\end{equation}
\noindent Examples of such lines appear in Figure~\ref{fig:reductionfig} which we now discuss.

\begin{figure}[!ht]
\begin{center}
\includegraphics{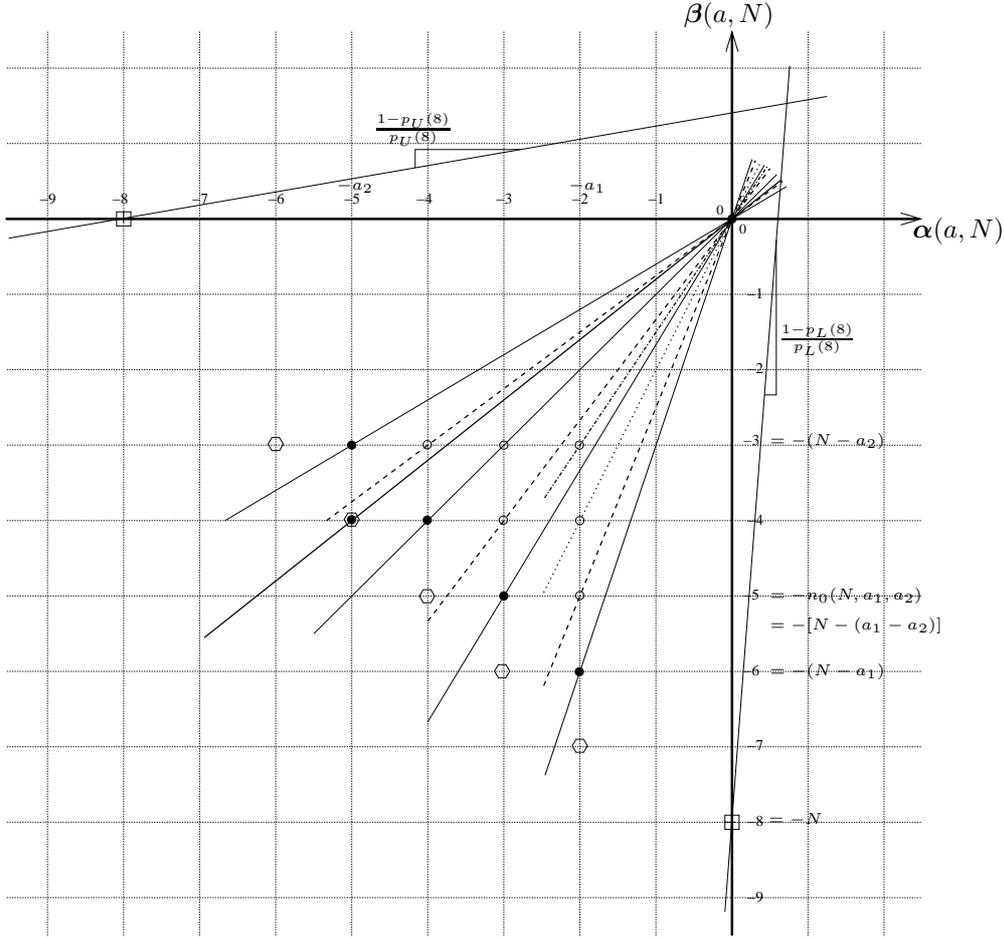}
\caption{Lines passing through the four {\it small dark filled} circles
represent the frequency mimicking conditional probabilities presumed
in Theorem 3 for the specific setup $Pa8[2,5,p_L(8),p_U(8)]$. Dashed and dotted lines passing
through small {\it open} circles represent further frequency mimicking
conditional probabilities implied by coherency.  The lines that pass through
the boxed points (0,-8) and (-8,0) represent the asserted lower limit $p_L(8)$ on
$p_{0,8}$ and the upper limit $p_U(8)$ on  $p_{8,8}$, respectively.  The points
designated by hexagons, along with the bold line that passes through
one of them, pertain to the suggestive graphical proof of Theorem \ref{thm:fourth}, to be 
discussed in Appendix 3.}

\label{fig:reductionfig}
\end{center}
\end{figure}

Equation (\ref{eq:parametricpaNline}) specifies a line passing through the point
$(-a, -(N-a))$ with a slope that equals the negative odds ratio
$(1-p_{a,N})/p_{a,N}$.   For example, the conditional probability
value $p_{4,8} = 4/8$ would be represented by a line through the
point $(-4,-4)$ with a slope
equal to $1$, as can be seen among the several lines
shown in Figure~\ref{fig:reductionfig}.  Specifically, \\
$\beta_{4,8} = -(8-4) + (4 + \alpha_{4,8}) (\frac{1-4/8}{4/8}) \ = \
\alpha_{4,8}\; $. \ \ This line, of course, also runs through the
origin $(0,0)$.  In fact, any line representing a frequency mimicking probability
$p_{a,N} = a/N$ must run through the origin and the
point $(-a, - (N-a))$, satisfying equation (\ref{eq:parametricpaNline}). 

Generally, the assertion of {\it any} numerical values whatsoever
for the conditional probabilities $p_{a,N}$ would be coherent, just
so long as the vector ${\bf p}_{N+1}$ lies within the unit-cube.
Each component $p_{a,N}$ would be represented by a line through $(-a, -(N-a))$
with a slope equal to the negative odds ratio that $p_{a,N}$
specifies. However, this line would not necessarily pass through the
origin. Only the lines representing frequency mimicking
probabilities
must do so.

Although coherency allows great freedom in the specification of
${\bf p}_{N+1}$, there is one important and indicative coherency
condition on lines representing components of the vectors ${\bf
p}_{N+1}$ and ${\bf p}_{N}$.  Recall the reduction equation
(\ref{eq:reduction}) discussed in Section \ref{sect:theinversion} which relate the three
conditional probabilities $p_{a,N-1}, p_{a,N}$, and $p_{a+1,N}$:
\begin{center}
$p_{a,N-1} \ \ = \ \ p_{a,N} \ / \ (1 - p_{a+1,N} + p_{a,N}) \ , \ \
\ \ {\rm for} \ \ a \ = \ 0, 1, ..., N-1\ $. \ \
(\ref{eq:reduction})
\end{center}
\noindent In the context of this geometrical representation of conditional probabilities by 
lines, two of these lines (representing $p_{a,N}$ and $p_{a+1,N}$)
pass through the diagonally adjacent points $(-a,-(N-a))$ and
$(-(a+1),-(N-(a+1)))$, respectively, with slopes appropriately equal
to the negodds ratios they specify. If they are both frequency mimicking
probabilities, these two lines must intersect at the origin $(0,0)$.
More generally, whatever their numerical values and wherever they
intersect, the geometrical implication of equation
(\ref{eq:reduction}) is that the third line representing
$p_{a,N-1}$, which passes through the point $(-a,-(N-1-a))$ must
also pass through the intersection point of the two lines
representing $p_{a,N}$ and $p_{a+1,N}$.  In the case of frequency
mimicking probabilities, this is again the origin.  The point
$(-a,-(N-1-a))$ which locates the line representing $p_{a,N-1}$ is
the third vertex of a right triangle whose other vertices are the
diagonally adjacent points $(-a,-(N-a))$ and $(-(a+1),-(N-(a+1)))$.
This structure of three intersecting lines can be
observed in several instances in Figure~\ref{fig:reductionfig}.

 Figure~\ref{fig:reductionfig} displays a
geometrical example of Theorem 3, specifically as it pertains to the
assertions we would denote by $PaN[a_1,a_2,p_L(N),p_U(N)] \ = \
Pa8[2,5,p_L(8),p_U(8)]$. The four lines that join each of the points
$(-2,-6), (-3,-5), (-4,-4)$ and $(-5,-3)$ with $(0,0)$ represent the
four frequency mimicking assertions constituting $Pa8[2,5,^.,^.]$, viz.,
$p_{2,8} = 2/8, \ p_{3,8} = 3/8, \ p_{4,8} = 4/8$ and $p_{5,8} =
5/8$. These lines through the origin are identified in the Figure by
{\it small darkened circles} on the four mentioned points. The slope
of each line through $(-a, -(N-a))$ and $(0,0)$ would equal the
relevant value of $(1-p_{a,N})/p_{a,N} \ = \ (N-a)/a$, which is the
conditional negodds ratio associated with a frequency mimicking
assertion for $p_{a,N}$.  The two lines through the boxed points
$(0,-8)$ and $(-8,0)$ whose slopes are labeled $(1-p_L(8))/p_L(8)$
and $(1-p_U(8))/p_U(8)$ respectively, represent the bounds asserted
for the probabilities $p_{0,8}$ and $p_{8,8}$.  These lines are
sloped more and less steeply than the lines representing $p_{2,8} =
2/8$ and
$p_{5,8} = 5/8$, respectively, a feature required by the fact that 
$PaN(^.,^.,^.,^.)$ assertions entail that values of $p_{a,N}$ are 
increasing in $a$.

\indent For the coherency reasons we have discussed above, the three
lines in Figure~\ref{fig:reductionfig} connecting the points $(-2,
-5), (-3,-4)$ and $(-4, -3)$ with $(0,0)$ then represent conditional
probabilities that are required to be frequency mimicking as 
well: \ $p_{2,7} = 2/7, p_{3,7} = 3/7$
and $p_{4,7} = 4/7$. For each of them must intersect the
intersection point of the lines running through points one unit to
their left and one unit below them.  These three lines are those
designated in the Figure by {\it small open circles} at the points
$(-2, -5), (-3,-4)$ and $(-4,-3,)$. Sequentially, the intersection
of any two adjacent lines from among these three must also intersect
at the origin along with a comparable line, one further size of $N$
down. Thus, Figure~\ref{fig:reductionfig} also includes these two
lines passing through the points $(-2,-4)$ and $(-3,-3)$,
representing $p_{2,6} = 2/6$ and $p_{3,6} = 3/6$, respectively.
Finally, the intersection of these two lines at $(0,0)$ requires one
final line, one further step down, through the points $(-2,-3)$ and
$(0,0)$. This represents the implied conditional probability
$p_{2,5} = 2/5$. As specified in the statement of Theorem \ref{thm:third}, the
value of $n_0 \ = \ N - (a_2 - a_1) \ = \ 8 - (5-2) \ = \ 5.$  All
lines implied by $Pa8[2,5,p_L(8),p_U(8)]$ appear as dashed lines
through open circles at appropriate points. These ten mentioned
lines exhaust the coherency conditions associated with the
assertions denoted by $Pa8[2,5,p_L(8),p_U(8)]$. These complete
implications of the Theorem can be enumerated:
$Pa8[2,5,p_L(8),p_U(8)]$ implies $Pa7[2,4,p_L(8),p_U(8)],
Pa6[2,3,p_L(8),p_U(8)],$ and $Pa5[2,2,p_L(8),p_U(8)]$. $\ \ \ \ \ \
\ \ \ \ \ \
\ \qed$

\indent The five additional points in Figure~\ref{fig:reductionfig}
designated by surrounding hexagons pertain to a continuation of this
example in Appendix 3.

\section*{Appendix 3.  A geometrical exposition of Theorem \ref{thm:fourth}}

Examine again the geometrical example in
Figure~\ref{fig:reductionfig}.  We have already discussed the
exhaustive extent of coherent implications of asserting
$Pa8[2,5,p_L(8),p_U(8)]$ for smaller values of $N$, as small as $n_0
= 5$. Now suppose you assert additionally one further conditional
probability, say $P(E_{10}|S_9 = 5) = 5/9$. (Notice that 5/9 lies
between 2/8 and 5/8, and you have already asserted frequency
mimicking conditional probabilities within this interval based on
eight conditioning events.) This additional assertion can be
represented by drawing another line into
Figure~\ref{fig:reductionfig} through the points $(-5, -4)$ and
$(0,0)$, which identifies $p_{5,9}\ = \ 5/9$. The point $(-5, -4)$
is identified in the Figure by both a dark-filled circle and a
surrounding hexagon. Based on your awareness of the geometrical
implication of the ``reduction'' equation (\ref{eq:reduction}), you
will now know from equation $(5^{\prime})$ that the line already
drawn through $(-5, -3)$ and $(0,0)$ must intersect this new line
through $(-5,-4)$ and $(0,0)$) at the same point as it intersects
still another line to be drawn through $(-6,-3)$, because this point
is diagonally adjacent to $(-5,-4)$. The point $(-6,-3)$ has been
surrounded by a small hexagon in Figure ~\ref{fig:reductionfig},
allowing you to draw a line through it and $(0,0)$ for yourself.
This line, required by coherency, means that you must now in
addition be asserting $p_{6,9} = 6/9$. For the same reason then,
moving down that diagonal array of points surrounded by hexagons,
the points $(-4,-5), (-3,-6)$ and $(-2,-7)$ are also identified by
hexagons. For as specified by Theorem \ref{thm:fourth}, coherency also then
requires lines through each of {\it these} points and $(0,0)$ as
well. You will notice that the slope of the line through $(-2,-7)$
does {\it not exceed} the slope of the bounding line
$(1-p_L(8))/p_L(8)\;$; $\ $ whereas the slope of the line through
$(-4,-5)$ {\it does exceed} the slope of the upper bounding line
$(1-p_U(8))/p_U(8)$.  Thus the coherent extension of
$Pa8[2,5,p_L(8),p_U(8)]$ by the assertion $p_{5,9} = 5/9$ is allowed
according to the theorem.  If these limiting slopes had been
exceeded then this coherent extension could only be extended still 
further to bounds $p_L(9)$ and $p_U(9)$ if they were specified to be 
sharper than $p_L(8)$ and $p_U(8)$.  Thus, it must be true that $p_L(9) < 2/9$ and
$p_U(9) > 6/9$. The lines now mentioned completely exhaust the
coherent implications of the presumptions of Theorem \ref{thm:fourth} relevant to
this example.  There are no further implications for a line either
through $(-7, -2)$ nor $(-1,-8)$. Notice that now frequency
mimicking is required for $N=9$ through the frequency domain $[2/9,
6/9]$ which expands the interval $[2/8, 5/8]$ required at $N=8$.  $\
\ \ \ \ \ \ \ \ \ \ \ \ \ \ \ \ \ \ \qed $
\section*{Appendix 4. Supplementary materials}

{\bf Contents:}  The materials in this appendix extend the descriptive discussion of issues that appear in Sections \ref{sect:5.4limit} and \ref{sect:5.5limitfmd} of the article text, including formal statements of Theorems and Proofs relevant to the discussion.
They focus on the construction of distributions inhering agglutinated masses,
and on the limiting distributions for the family of FMD's over a finite interval.
\subsection*{The limit of extended FMD's for a proportion: \\ \hspace*{1cm}distributions exhibiting agglutinated masses}
Exchangeable distributions are most widely known
on account of de Finetti's representation theorem.  It says that if a
sequence of events $E_1, ..., E_{N+1}$ is regarded exchangeably and
as infinitely exchangeably extendible, then for any $a$ and $N$, \vspace{.13cm} \\
\hspace*{.75cm} $P(S_{N+1} = a) \ = \ P(\overline{S}_{N+1} = \frac{a}{N+1}) \ = 
\binom{N+1}{a}
\; \int_0^1 \theta^a\; (1-\theta)^{N+1-a}\;
dM(\theta)$\ , \vspace{.13cm}\\
where \ $M(\theta) \; = \; \lim_{K\rightarrow\infty} M_{\overline{S}_{N+K}}(\theta)  $\
for some sequence of finitely additive distributions
\{$M_{\overline{S}_{N+K}}(^.)$\}$_{K=1}^\infty$.\ \;
See Heath and Sudderth (1976), Landenna and Marasini (1986, pp. 87-89) or Lad (1996, pp. 207-209). Diaconis
and Freedman (1980) noted that even if the distribution is
exchangeably extendible only to N+K, then for some such distribution
$M(^.)$ this mixture representation differs from the actual value of
$P(S_{N+1} = a)$ by at most $4N/(N+K)$ for any ``$a$''.  In
practice, the mixing distribution $M(^.)$ in the representation
theorem (or the ``prior distribution for $\theta$'' as it is commonly referred to) is
meant to represent one's initial opinions about the relative
frequency of success in an arbitrarily large sequence of events
one would regard exchangeably with  {\bf E}$_{N+1}$.\\

For any finite subsequence of course, the distribution $M_{N+K}(^.)$ is finitely
additive;  and successive distributions $M_K({\bar{S}_{N+K}})$ in
the sequence must be related by the reduction equations (\ref{eq:usualreduction}) which we
reviewed in Section \ref{sect:reductive}. However, no restrictions are placed on
probability assertions regarding the average of {\it countably} infinite
sequences. The coherency condition specifies only that all
distributions in the sequence are finitely additive.  Thus, the
limit of the finitely additive distributions is not necessarily equal to the
distribution of the limit of ${\bar{S}_{N+K}}$.  The assertion of finite additivity does not constitute a {\it restriction} on the mixing functions considered, but rather a {\it liberation} relative to the axiom of countable additivity which is commonly presumed.  Countably additive distributions are {\it permissible} in the limit, since
they are finitely additive as well. However, de Finetti's insistence
on mere finite additivity as the only meaningful operational
characterisation  of probability allows distributions that extend
the realm of theoretical discussion to wider possibilities,
including those relevant to infinitely extendible FMDs which we
address here.\\

The important distributions in all real problems of
practice are the finite members of the sequence
\{$M_K({\bar{S}_{N+K}})$\}, {\it not} the limit of this sequence.  Nonetheless, in the
context of extendible frequency mimicking distributions
over a limited domain, we can state precisely what happens to the
{\it limiting distribution of the frequencies} $\bar{S}_{N+K}$ as K
increases. (Note again, this is something different from the distribution of the limit of the frequencies.)  Rather than directly stating this result as a
Theorem to be proved, it will be more informative to develop our
understanding through a constructive discussion, with proofs of
algebraic claims within the discussion deferred to the conclusion of the discussion. We begin with a brief informal  overview,
and then state the formalities and proof of a Theorem which develops from a discussion of details.\\

Further to the conditions of Theorem 4 and Corollary 1, which imply
the assertions of \ $Pa(N+K)[a_1, K+a_2, ^., ^.]$ for every $K
\geq 1$, \; any probability mass function denoted by {\bf q}$_{N+K+2}$
is restricted to have only $a_1 + (N - a_2)$ free components. As the
value of $K$ increases, the tendency in all agreeing distributions
is for virtually all the mass in the vector {\bf q}$_{N+K+2}$ to
settle essentially on two points, $S_{N+K+1} = a_1-1$ and $S_{N+K+1}
= a_2+K+1$.  While the probability mass functions for each
$\bar{S}_{N+K}$ are ordinary, the limit of this sequence of
distributions for $\bar{S}_{N+K}$ is an unusual one.  It is improper
and finitely additive, assigning probability $0$ to the points $0$
and $1$ {\it and} to every open interval {\it strictly} within $(0,1)$. However it
exhibits what  \cite{deFinetti1949,deFinetti55}
called ``adherent masses'' at $0$ and $1$
via the assertions of $P(0,a)+P(b,1) = 1$ for any $0 < a < b < 1$.
Recall our Definition \ref{def:first} near the end of Section \ref{sect:intro} of this article. The source and meaning of
such masses emerges from attention to some algebra.\\

To begin the analysis, notice that the positions of the two points
of amassment for the sum $S_{N+K+1}$, that is, $a_1-1$ and
$a_2+K+1$, when divided by $(N+K+1)$, converge to $0$ and to $1$ as
$K$ increases. This is the source of the adherent masses at $0$
and $1$ in the limiting distribution for the sequence of finitely
additive distributions. The total mass of the distribution is
settling on points that are always separated discretely from 0 and 1.
However, the points themselves are being pressed toward these
endpoints of the unit interval, with
gaps between them and the endpoints diminishing.\\

Here is what happens.  ({\it Algebraic details of the following
	statements are presented as a conclusion to this presentation, after
	the resulting Theorem is formalised.}) Each of
the $[K+(a_2-a_1)+1]$ frequency mimicking assertions constituting
those denoted by $Pa(N+K)(a_1, K+a_2, p_L(N+K), p_U(N+K))$ places
one linear restriction on the components of the mass function {\bf
	q}$_{N+K+2}$.   Together with the usual summation restriction, this
means that {\bf q}$_{N+K+2}$ has only $N - (a_2 - a_1)$ free
components.  Suppose we partition all the components of ${\bf q}_{N+K+2}$ into three groups:  the
initial $a_1$ probabilities as $a$ runs from $0$ to $a_1-1$;  the
intermediate $K+(a_2-a_1+1)$
probabilities; and the final $N-a_2+1$ probabilities, thinking of the
first and third groups as constituting the free variables. When $K$
becomes large enough, its free components in the first group are
restricted to be essentially geometrically increasing, while the
third group is restricted to be geometrically decreasing, at
rates proportional to $K$ and $1/K$ respectively. While the first and
third groups of vector components are otherwise relatively free (though
necessarily increasing and decreasing respectively), the sum
of all the interior $q_{a,N+K+1}$ components must equal \vspace{.1cm}\\
\hspace*{.4cm}$q_{a_1}\; \frac{a_1\;(N+K+1-a_1)}{N+K+1}\
[H(a_2+K) - H(a_1-1) + H(N+K-a_1+1) - H(N-a_2)\;]$. \vspace{.1cm}\\
(Recall from Theorem 1 that the function $H(^.)$ is a harmonic sum.)
Since the value of this bracketed $[\;^.\;]$ coefficient on $q_{a_1}$
is unbounded as K increases, the value of $q_{a_1}$ and all
subsequent $q_a$ through $q_{a_2 + K}$ must each deteriorate to $0$.
{\it Moreover, it can be shown that the entire sum of this
	interior series (Group 2) of probability components diminishes toward $0$ as
	well.} Meanwhile, the values of $q_0$ through $q_{a_1-1}$ increase
geometrically by factors of order K. Similarly, the values of
$q_{a_2+K+1}$ through $q_{N+K+1}$ decrease geometrically by factors
of order $1/K$.  Thus, for increasingly large values of K, the probability mass
function comes to be supported essentially only on the points $(a_1
- 1)$ and $(a_2+K+1)$.  This implies that the limiting distribution
of the average, $\bar{S}_{N+K+1}$, becomes uniform at 0 everywhere
on $[0,1]$.  However, it exhibits adherent masses at $0$ and $1$, \ because the
ratios that characterize these points of amassment converge there: \ $(a_1-1)/(N+K+1) \;
\rightarrow \; 0$ \ and \ $(a_2+K+2)/(N+K+1) \;
\rightarrow \; 1$ \ as $K\; \rightarrow \; \infty$.\\

Having introduced this analysis as a discussion, we shall conclude
it with a formal statement of the Theorem that it has motivated, followed
by a complete algebraic proof.\\

\begin{theorem}\label{thm:fifth}
 Further to the conditions of Theorem \ref{thm:fourth} and
Corollary \ref{cor:first}, which imply the assertions of \vspace{.13cm} \\
$Pa(N+K)[a_1, a_2+K, p_L(N+K) \leq a_1/(N+K+1),\vspace{.13cm} p_U(N+K) \geq (a_2+K)/(N+K+1)]$ \\ \ for every $K
\geq 1$, any probability mass function denoted by {\bf q}$_{N+K+2}$
is restricted to have only $N - (a_2 - a_1)$ free components. Even
these are restricted in two groups to exhibit a geometrically
increasing and a geometrically decreasing structure.  As the value
of $K$ increases, all the mass settles essentially on two points,
$S_{N+K+1} = a_1-1$ and $S_{N+K+1} = a_2+K+1$.  While the
probability mass functions for each $\bar{S}_{N+K}$ are ordinary,
the limit of the distributions for $\bar{S}_{N+K}$ is improper and
finitely additive, assigning probability $0$ to the points $0$ and
$1$ and to every open interval within $(0,1)$, but with adherent
masses at $0$ and~$1$.  
\end{theorem}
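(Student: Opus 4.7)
My plan has three computational stages, all based on the exact recursion~(\ref{eq:recursiveform}) together with the frequency-mimicking assertions $p_{a,N+K}=a/(N+K)$ for $a\in[a_1,a_2+K]$ forced by Corollary~\ref{cor:first}. Write $M=N+K+1$. The counting of free components is a direct tally: the $a_2+K-a_1+1$ frequency-mimicking assertions give that many linear relations via~(\ref{eq:recursiveform}) among the $a_2+K-a_1+2$ interior components $q_{a_1,M},\dots,q_{a_2+K+1,M}$; together with the unit-sum constraint this leaves the $a_1$ lower components, the $N-a_2$ upper components and one scale parameter inside the interior, minus one for the unit sum, giving $N-(a_2-a_1)$ free parameters.

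The core of the proof is an exact interior formula. Substituting $p_{a,N+K}=a/(M-1)$ into~(\ref{eq:recursiveform}) produces the ratio $q_{a+1,M}/q_{a,M}=a(M-a)/[(a+1)(M-1-a)]$, whose two factors telescope over $a\in[a_1,a_2+K]$ to
\[
q_{a,M}\;=\;\frac{a_1(M-a_1)}{a(M-a)}\,q_{a_1,M},\qquad a_1\le a\le a_2+K+1.
\]
The partial-fraction identity $1/[a(M-a)]=M^{-1}(1/a+1/(M-a))$ then yields the interior total mass
\[
T_{\mathrm{int}}(K)\;=\;\frac{a_1(M-a_1)}{M}\bigl[H(a_2+K+1)-H(a_1-1)+H(M-a_1)-H(N-a_2-1)\bigr]\,q_{a_1,M},
\]
whose bracketed factor is $(2+o(1))\log K$ because two of its four harmonic terms scale like $\log K$. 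The inequality $T_{\mathrm{int}}(K)\le 1$ then forces $q_{a_1,M}=O(1/\log K)$, so each individual interior probability tends to zero.

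For Groups~1 and~3 I would combine the recursion with monotonicity: for $a<a_1$ the constraint $p_{a,N+K}\le a_1/(N+K)$ keeps the ratios $q_{a+1,M}/q_{a,M}=(M-a)/(a+1)\cdot p_{a,N+K}/(1-p_{a,N+K})$ bounded in $K$ and, at the upper monotonic bound, asymptotic to $a_1/(a+1)$, yielding the asserted geometric profile; the upper group is symmetric using $p_{a,N+K}\ge(a_2+K)/(N+K)$. Finally, fixing $0<a<b<1$ and taking $K$ large enough that $[a,b]$ lies inside the interior range $[a_1/M,(a_2+K+1)/M]$, the same partial-fraction computation restricted to the window $[\lceil aM\rceil,\lfloor bM\rfloor]$ gives a harmonic bracket that stays \emph{bounded} in $K$, whence
\[
P(\bar S_{N+K+1}\in[a,b])\;=\;O(q_{a_1,M})\;=\;O(1/\log K)\longrightarrow 0,
\]
while the complementary mass in $(0,a)\cup(b,1)$ tends to $1$, sourced from the wing groups and from the logarithmic boundary layers of the interior which collapse onto $\{0\}$ and $\{1\}$ under the rescaling $a/M$. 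This matches Definition~\ref{def:first} exactly.

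The main obstacle is that $T_{\mathrm{int}}(K)$ itself need not tend to zero; what vanishes is only its contribution on any compact subinterval of $(0,1)$, while the surviving mass accumulates in a logarithmic boundary layer around $\bar S=0$ and $\bar S=1$. Carrying the partial-fraction identity through exactly, rather than bounding it, is what pins down this boundary-layer phenomenon; along the way one must also verify that the freedom in choosing the $p_a$ outside $[a_1,a_2+K]$ (subject only to monotonicity and the $p_L,p_U$ bounds) affects only the relative split of the two adherent masses, not their combined total weight of one.
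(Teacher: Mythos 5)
Your computation coincides with the paper's own proof in its essentials: the same three-group partition of {\bf q}$_{N+K+2}$, the same exact interior ratio obtained by inserting $p_{a,N+K}=a/(N+K)$ into the recursion (\ref{eq:recursiveform}), and the same partial-fraction/harmonic-sum identity --- your bracketed factor is, up to an endpoint convention, the paper's line (A) --- from which both arguments extract $q_{a_1,M}=O(1/\log K)$ and the free-component count $N-(a_2-a_1)$. Where you genuinely diverge is in disposing of the interior mass. The paper asserts that the \emph{total} Group 2 mass tends to $0$, so that the mass piles up on the two single points $S_{N+K+1}=a_1-1$ and $a_2+K+1$; it argues this by a reduction device in the style of its Theorem 2, reducing {\bf q}$_{N+K+2}$ to a fixed level via (\ref{eq:generalizedusualreduction}) and bounding the reduced interior sum by a bounded multiple of $q_{a_1,N+M+1}$. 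You instead restrict the harmonic sum to a fixed window $[\lceil aM\rceil,\lfloor bM\rfloor]$, obtain a bracket that is bounded in $K$, and conclude $P(\bar S\in[a,b])=O(1/\log K)$ directly. Your version is self-contained, is precisely what Definition \ref{def:first} requires, and avoids the delicate passage from ``every fixed-level reduction has vanishing interior mass'' to ``the unreduced interior mass vanishes,'' which the paper does not really bridge. Your caveat that $T_{\mathrm{int}}(K)$ need not vanish is well taken as a statement about the level-$K$ constraints in isolation (one may take $q_{a_1,M}\asymp c/(2a_1\log M)$ and park mass $c$ in the logarithmic boundary layers); since those layers still collapse onto $\{0\}$ and $\{1\}$ under the rescaling $a/M$, the adherent-mass conclusion survives, even though the paper's sharper ``two points'' picture does not follow from your argument. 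The one point you pass over too quickly is the claim that the limit assigns probability $0$ to the points $0$ and $1$ themselves: with $p_L(N+K)\le a_1/(N+K+1)\to 0$, the Group 1 ratios you exhibit do not obviously force $q_{0,M}\to 0$, so a residual atom at $S_{N+K+1}=0$ is not excluded by your monotonicity bounds alone (nor, to be fair, by the paper's).
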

\subsection*{Algebraic proof of Theorem \ref{thm:fifth}:}

The proof revolves upon an algebraic representation of the coherent implications of
asserting $Pa(N+K)[a_1,a_2+K,^.,^.]$ for the pmf {\bf q}$_{N+K+2}$.  Via this representation we can study the limiting distribution for the average number of successes.
For the sake of simplicity in the representation, we shall suppress the second subscript on terms of the form
$q_{a,N+K+1}$ and $p_{a,N+K}$, which merely identify the number of
events under consideration. Here it is always the vector {\bf
	E}$_{N+K+1}$. We print these second subscripts only in the very first
statement of $q_{0,N+K+1}$ as a function of {\bf p}$_{(N+K+1)}$. The
expression for this term incorporates the
summation constraint that all terms $q_{a,N+K+1}$ sum to $1$.\\

To begin, we express the recursive equations $(3)$ 
from Section 2.1 of the main article, 
\begin{equation}
q_{a+1,N+1} \ = \ (\frac{N+1-a}{a+1}) \ (\frac{p_{a,N}}{1 - p_{a,N}}) \ q_{a,N+1} \ , \ \ \ \ {\rm for}\  a = 0, .. . ,N,
\label{eq:recursiveform}
\end{equation}
but applied now to $N+K+1$ events rather than
merely to $N+1$, in three groups: the group sizes are $a_1$, $a_2 + K -
(a_1 -1)$, and $N-a_2+1$. These numbers sum to $N+K+2$, the size of the
pmf vector {\bf q}$_{N+K+2}$.\\

\noindent {\bf Group 1.}  The first term incorporates the summation
constraint, while the following $a_1-1$ relatively unconstrained
terms can be seen to grow geometrically by factors of order K, on account of
the factor $\frac{N+K+a-1}{a}$ which appears in each recursive equation.
Remember that when the integer $i$ is not within $[a_1, a_2]$ the
values of $p_i$ terms are
constrained only to be nondecreasing.\\

$q_{0,N+K+1} \ = \ \{1 + \displaystyle\sum_{a
	=0}^{N+K}
\binom{N+K+1}{a+1}
\prod_{i=0}^{a}\
\frac{p_{i,N+K}}{1-p_{i,N+K}}\}^{-1} $\ \ ,\\

$q_1 \ = \ (N+K+1)\ \frac{p_0}{1-p_0} \ q_0$\\

$q_2 \ = \ \frac{N+K}{2}\ \frac{p_1}{1-p_1} \ q_1$\\

$q_3 \ = \ \frac{N+K-1}{3}\ \frac{p_2}{1-p_2} \ q_2$\\

$^.$.$_.$\\

$q_{a_1-1} \ = \ \frac{N+K+1-(a_1-2)}{a_1-1}\ \frac{p_{a_1-2}}
{1-p_{a_1-2}} \ q_{a_1-2}$ \ \ \ \ \ \ \ .\\

\noindent {\bf Group 2.}  The first equality for each $q$ in this
group of $a_2 + K - a_1 + 1$ terms continues this format of
recursive representations.  The second equality
replaces each of the relevant
$p_{a,N+K}$ assertions with their frequency mimicking  values, $a/(N+K)$,
in the recursive form, and simplifies the algebraic expression so that patterns can be seen:\\

\hspace{.25cm}$q_{a_1} \ \ = \ \frac{N+K+1-(a_1-1)}{a_1}\ \frac{p_{a_1-1}}
{1-p_{a_1-1}} \ q_{a_1-1}$\\

$q_{a_1+1} \ = \ \frac{N+K+1-a_1}{a_1+1}\ \frac{p_{a_1}}{1-p_{a_1}}
\ q_{a_1} \ \ = \ \ \frac{N+K-a_1+1}{a_1+1}\ \frac{a_1}{N+K-a_1} \
q_{a_1}$\\

$q_{a_1+2} \ = \ \frac{N+K+1-(a_1+1)}{a_1+2}\
\frac{p_{a_1+1}}{1-p_{a_1+1}} \ q_{a_1+1} \ \ = \ \
\frac{N+K-a_1}{a_1+2}\ \frac{a_1+1}{N+K-(a_1+1)}
\ q_{a_1+1}$\\

$^.$.$_.$\\

$q_{a_2+K-1} \ = \ \frac{N+K-(a_2+K-2)}{a_2+K-1}\
\frac{p_{a_2+K-2}}{1-p_{a_2+K-2}} \ q_{a_2+K-2} \ \ = \ \
\frac{N-a_2+3}{a_2+K-1}\ \frac{a_2+K-2}{N-a_2+2} \ q_{a_2+K-2}$\\

$q_{a_2+K} \ = \ \frac{N+K+1-(a_2+K-1)}{a_2+K}\
\frac{p_{a_2+K-1}}{1-p_{a_2+K-1}} \ q_{a_2+K-1} \ \ = \ \
\frac{N-a_2+2}{a_2+K}\ \frac{a_2+K-1}{N-a_2+1} \ q_{a_2+K-1}$ \ \ .\\

\noindent {\bf Group 3.}  Continuing these recursive
expressions, although the third group of $N-a_2+1$ terms are again
relatively unrestricted, they eventually
decrease geometrically by factors on the order of $1/K$.  This is evident
on account of the factor $\frac{N-a_2+1}{a_2+K+1}$ which appears in each
simplified recursive equation.
(The three lines showing a second equality are merely algebraic  simplifications.)\\

$q_{a_2+K+1} \ = \ \frac{N+K+1-(a_2+K)}{a_2+K+1}\
\frac{p_{a_2+K}}{1-p_{a_2+K}} \ q_{a_2+K} \ \ = \ \
\frac{N-a_2+1}{a_2+K+1}\ \frac{p_{a_2+K}}{1-p_{a_2+K}}
\ q_{a_2+K}$\\

$q_{a_2+K+2} \ = \ \frac{N+K+1-(a_2+K+1)}{a_2+K+2}\
\frac{p_{a_2+K+1}}{1-p_{a_2+K+1}} \ q_{a_2+K+1}
\ \ = \ \
\frac{N-a_2}{a_2+K+2}\ \frac{p_{a_2+K+1}}{1-p_{a_2+K+1}}
\ q_{a_2+K+1}$\\

$q_{a_2+K+3} \ = \ \frac{N+K+1-(a_2+K+2)}{a_2+K+3}\
\frac{p_{a_2+K+2}}{1-p_{a_2+K+2}} \ q_{a_2+K+2} \ \ = \ \
\frac{N-a_2-1}{a_2+K+3}\ \frac{p_{a_2+K+2}}{1-p_{a_2+K+2}}
\ q_{a_2+K+2}$\\

$^.$.$_.$ \ \ until \\

$q_{N+K} \ = \ \frac{2}{N+K}\ \frac{p_{N+K-1}}{1-p_{N+K-1}}
\ q_{N+K-1}$ \ ; \ \ and finally,\\

$q_{N+K+1} \ = \ \frac{1}{N+K+1}\ \frac{p_{N+K}}{1-p_{N+K}} \ q_{N+K}$ \ .\\

{\bf We begin the analysis now by examining  the second group of $q_a$ values,} in particular
the {\it sum} of these $q_a$'s.  Having substituted the frequency
mimicking values of $p_i$ with $i/(N+K)$, and now substituting the
recursive multiplicands $q_{a_1+i}$ with their expressions in terms of
$q_{a_1}$ using equation (3), the sum of all the constrained
terms in the second group becomes \\

%
%
%


$q_{a_1} \ a_1\ (N+K+1-a_1)\ [ \frac{1}{a_1(N+K+1-a_1)}
\ + \ \frac{1}{(a_1+1)(N+K-a_1)}
\ + \
\frac{1}{(a_1+2)(N+K-a_1-1)} \ + \ ^{...} \ + \\
\hspace*{7.5cm}+ \ \frac{1}{(a_2+K-1)(N-a_2+2)}\ + \ \frac{1}{(a_2+K)(N-a_2+1)}] \ $ \\

$= \ q_{a_1}\ a_1\ (N+K+1-a_1)\ \Sigma_{i = a_1}^{a_2+K} \ \frac{1}{i\ (N+K+1-i)} \ $ \\

$= \ q_{a_1}\ \frac{a_1\ (N+K+1-a_1)}{N+K+1}\ [\;\Sigma_{i
	= a_1}^{a_2+K}\ \frac{1}{i} \ + \
\Sigma_{i = a_1}^{a_2+K} \frac{1}{N+K+1-i}\;]  $\\

$= \; q_{a_1}\; \frac{a_1\;(N+K+1-a_1)}{N+K+1}\
[H(a_2+K) - H(a_1-1) + H(N+K-a_1+1) - H(N-a_2)\;]$ . (A)\\

\noindent Because of the similarity of the harmonic sum $H(^.)$ to the natural logarithm as the value of $K$ increases, the expression in square brackets seen in line (A) becomes not too different from $log(a_2+K) \;+\; log(N+K-a_1+1)$ which is unbounded.  This implies that $q_{a_1} \rightarrow 0$ as $K
\rightarrow \infty$.  Moreover, the
value of all subsequent $q_a$ terms in the second itemised group
converge to $0$ as well, on account of their recursive relation to $q_{a_1}$.\\

Furthermore, the {\it sum} of all terms in the second
group converges to $0$ too as $K$ increases.  This follows from an argument similar
to our proof of Theorem 2.  Firstly, the iterative use of the
reduction equation $(4)$ in Section 2.2 of the article will reduce the mass
vector {\bf q}$_{N+K+2}$ to a vector of any lower fixed dimension, {\bf
	q}$_{N+M+2}$. Then summing the components of this {\bf q}$_{N+M+2}$
over its component members $a_1, a_1+1, ..., a_2+N+M+1$ (the region
of frequency mimicking) would yield a bounded multiple of $q_{a_1,
	N+M+1}$ similar to equation (7) in the proof of Theorem 2.  Thus, for any
positive value of $M$, this sum of terms in the second group converges to 0 as $K$ increases.  In fact, Theorem 2 can now be seen as a special case of this result, supposing $a_1 = 1$ and $a_2 = N+K-1$.\\

Now in contrast to the second intermediate group of ordered $q$'s
whose sum is tending toward $0$, notice in  our first group sequence of
formulae that the values of $q_0, q_1, ..., q_{a_1-1}$ must be
ascending as a geometric progression, each of them augmenting in
size by a factor on the order of $N+K$. Similarly, the values in the
third group of $q_{a_2+K+2}$ through $q_{N+K+1}$ must eventually be
descending geometrically by a factor of a similar order, $1/(N+K)$.  Thus, as
$K \rightarrow \infty$, the entire distribution of $S_{N+K+1}$
becomes essentially amassed on the end-point terms of these groups,
term $a_1-1$ and term $a_2+K+1$. The first of these, term
$a_1-1$ is the $a_1^{\rm th}$ smallest possibility for the sum, while
the latter term $a_2+K+1$  is $(N-a_2)^{th}$ largest possible value of the sum.
When the value of the sum $S_{N+K+1}$ is divided by ${N+K+1}$,
the masses at these two positions for the proportion
$\bar{S}_{N+K+1}$ become forced toward the boundaries of the interval
$(0,1)$.  These sequence limits which are separated from the interval endpoints, $0$ and $1$, are
the source of adherent masses at $0$
and $1$ in the finitely additive limiting distribution for
$\bar{S}_N$. \ \ \ \ \ \ \ \ \ \ \
\ \ $\qed$

\subsection*{The limiting boundary of FMD's over a constrained interval}

Having reached this conclusion about the applicability of FMD's to finite population problems, it is intriguing to investigate the limiting distribution for the family of finite distributions agreeing with the assertions $PaN[a_1,a_2,p_L(N),p_U(N)]$ for any $N$.  Consider the limit of distributions that respect frequency mimicking behaviour over the largest rational interval within a constant real interval $[\theta_1,\theta_2]$ as the size of N increases.  An analysis that parallels the algebraic representations in the proof of Theorem 5, with an important modification, yields the result that in the case of ``Strict'' or ``Linear'' extensions outside the FMD interval this limiting distribution is mixture Binomial with respect to a 4-parameter {\it Incomplete Beta} mixing function with a constrained domain.\\

In this case it will be simplest to state the theorem and then to describe the structure of its proof before getting into precise details.  These will then rely on the algebraic representations derived in the proof of Theorem 5, but apply them with a slight twist.

\begin{theorem} For any choice of fixed real values for $\theta_1 < \theta_2$, each within $(0,1)$, define $a_1(\theta_1,N) \equiv [[N\theta_1$ as the smallest integer $a$ for which the rational number $a/N \geq \theta_1$, and $a_2(\theta_2,N) \equiv \ N\theta_2]]$ as the largest integer $a$ for which $a/N \leq \theta_2.$  The limiting distribution for the family of finite FMD's specified by $PaN[a_1(\theta_1,N), a_2(\theta_2,N),^.,^.]$ along with ``Strict'' or ``Linear'' extensions of conditional probabilities outside of $(\theta_1, \theta_2)$ is a 4-parameter Incomplete Beta mixture of Binomial distributions, with Incomplete Beta mixing parameters $(\theta_1, \theta_2, 0, 0)$. 
\end{theorem}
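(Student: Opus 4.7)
The plan is to establish weak convergence of the distribution of $\bar{S}_{N+1}$ under the finite FMDs to the Incomplete Beta$(\theta_1,\theta_2,0,0)$ distribution by analysing $q_{a,N+1}$ separately on the interior $[a_1,a_2]$ and on the two exterior tails, where $a_1 = a_1(\theta_1,N)$, $a_2 = a_2(\theta_2,N)$, and $C = \{\log[\theta_2/(1-\theta_2)] - \log[\theta_1/(1-\theta_1)]\}^{-1}$ denotes the Incomplete Beta normalising constant. The interior will match the target density directly via the recursion used in the proof of Theorem \ref{thm:fifth}, while the Strict or Linear completion will be shown to give vanishing total mass outside the interval.

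The first main step transplants the Group~2 calculation in the proof of Theorem \ref{thm:fifth}, specialised to $K=0$, to yield
\begin{equation*}
q_{a,N+1} \ = \ \frac{a_1(N+1-a_1)}{a(N+1-a)}\, q_{a_1,N+1}\ \ \text{for } a\in[a_1,a_2],
\end{equation*}
together with
\begin{equation*}
\sum_{a=a_1}^{a_2} q_{a,N+1} \ = \ q_{a_1,N+1}\,\frac{a_1(N+1-a_1)}{N+1}\,\bigl[H(a_2)-H(a_1-1)+H(N+1-a_1)-H(N-a_2)\bigr].
\end{equation*}
Because $H(\lfloor \alpha N\rfloor) = \log(\alpha N) + \gamma + o(1)$, the bracketed sum converges to $\log[\theta_2(1-\theta_1)/(\theta_1(1-\theta_2))] = C^{-1}$, while the prefactor satisfies $a_1(N+1-a_1)/(N+1)\sim \theta_1(1-\theta_1)N$. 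Granting the exterior vanishing proved below, the interior sum must tend to $1$, forcing $q_{a_1,N+1}\sim C/[\theta_1(1-\theta_1)N]$. Converting mass to density by multiplying by $N+1$ and letting $\theta = a/N\to\theta\in(\theta_1,\theta_2)$ then gives $(N+1)q_{a,N+1}\to C/[\theta(1-\theta)]$, which is precisely the Incomplete Beta$(\theta_1,\theta_2,0,0)$ density. The matching can also be verified at the level of ratios: $q_{a+1}/q_a = a(N+1-a)/[(a+1)(N-a)]$ coincides with the ratio of Binomial-mixture probabilities under the Incomplete Beta prior, up to exponentially small boundary contributions $\theta_i^a(1-\theta_i)^{N-a}$ produced by integration by parts on $\int_{\theta_1}^{\theta_2}\theta^{a-1}(1-\theta)^{N-a}d\theta$.

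The second main step controls the exterior mass through Laplace analysis of equation (\ref{eq:genbin}). For the Strict extension, $p_{i,N}\equiv a_1/N$ on $i<a_1$ forces $q_{a,N+1}\propto \binom{N+1}{a}(a_1/N)^{a}(1-a_1/N)^{a_1-a}$ for $a\le a_1$, which is a unimodal sequence peaked at $a=a_1$ with approximately Gaussian shape and standard deviation of order $\sqrt{\theta_1(1-\theta_1)N}$ by Stirling. Summing gives $\sum_{a<a_1}q_{a,N+1}=O(q_{a_1,N+1}\sqrt N)=O(1/\sqrt N)\to 0$, and the right tail is symmetric. For the Linear completion, the monotone interpolation of $p_{i,N}$ between $p_L(N)$ and $a_1/N$ is handled similarly: the log of the product $\prod_{i<a}p_{i,N}\prod_{i\ge a}(1-p_{i,N})$ still has a unique maximum at $a=a_1$ with quadratic curvature, yielding Gaussian concentration there and again an $O(1/\sqrt N)$ tail bound. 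Combining the interior density limit with the exterior vanishing gives weak convergence of $\bar{S}_{N+1}$ to the Incomplete Beta$(\theta_1,\theta_2,0,0)$ density.

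The main obstacle will be the Laplace estimate for the Linear case: the density $q_{a,N+1}$ is no longer proportional to a truncated binomial, so the Gaussian concentration must be extracted directly from Stirling's expansion of $\log\binom{N+1}{a}+\sum_{i<a}\log p_{i,N}+\sum_{i\ge a}\log(1-p_{i,N})$, and one must verify that the slope discontinuity of the interpolation scheme at $i=a_1$ does not spoil the quadratic structure near $a=a_1$. Once this uniform Gaussian bound is in hand, the exterior contributes $O(1/\sqrt N)$ under both completion rules and the weak convergence argument of the first two steps closes the proof.
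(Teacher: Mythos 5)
Your interior (Group 2) analysis coincides with the paper's own proof in Appendix 4: the same three-way partition of $\mathbf{q}_{N+2}$, the same telescoped recursion giving $q_{a,N+1} = \frac{a_1(N+1-a_1)}{a(N+1-a)}\,q_{a_1,N+1}$ on $[a_1,a_2]$, the same harmonic-sum identity, and the same limit $C^{-1} = \log[\theta_2(1-\theta_1)/(\theta_1(1-\theta_2))]$ for the bracketed coefficient, yielding the density $C/[\theta(1-\theta)]$ on $(\theta_1,\theta_2)$. Where you genuinely depart is the treatment of the exterior mass. The paper identifies the Strict-case Group 1 sum (taken relative to $q_0$) with the lower tail $P[S/(N+1)<\theta_1]$ of a $Binomial(N+1,\theta_1)$ variable and appeals to the law of large numbers, then disposes of the Linear case by termwise domination of the odds ratios $p_i/(1-p_i)$ by $\theta_1/(1-\theta_1)$. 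Your route --- bounding each exterior tail by $O(q_{a_1,N+1}\sqrt{N})$ via local-CLT concentration at the mode $a=a_1$ and playing this off against the interior total $\Theta(q_{a_1,N+1}\,N)$ --- is more quantitative and, in fact, sounder: the Binomial tail $P[S/(N+1)<\theta_1]$ does not tend to $0$ (it tends to $1/2$), and the paper's termwise domination is relative to $q_0$ rather than to the total mass, so neither step as written by itself forces the exterior to vanish; what does force it is precisely your $\sqrt{N}$-versus-$N$ comparison, which also makes the apparent circularity between ``interior sum tends to $1$'' and ``$q_{a_1,N+1}\sim C/[\theta_1(1-\theta_1)N]$'' resolve as a legitimate bootstrap. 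The ``obstacle'' you flag for the Linear completion is real but closes cleanly: writing $a=xN$ and letting $p(x)$ be the linear interpolant from $p_L$ to $\theta_1$, one has $p(x)-x=p_L(1-x/\theta_1)>0$ for $x<\theta_1$, hence $\log(q_a/q_{a+1})\approx\log\frac{x(1-p(x))}{(1-x)p(x)}<0$ throughout Group 1 with strictly positive derivative at $x=\theta_1$; Laplace's method then gives the same Gaussian width $O(\sqrt{N})$ as in the Strict case, and the slope discontinuity of the interpolation at $a_1$ is harmless because only the one-sided expansion from below enters.
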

\noindent{\bf Comments:}  Notice firstly that if the specification of either $\theta_i$ is an irrational number, then the definition of the associated $a_i(\theta_i, N)$ is always a well defined integer for which $a_i/N$ appropriately exceeds or falls short of this $\theta_i$.  On the other hand, if the value of $\theta_i$ is rational, then the associated value of the rational number $a_i(\theta_i, N)/N$ will differ from $\theta_i$ only when their minimum denominators are incommensurable.  If the value of $\theta_1$ equals $J/K$ for some integers $J$ and $K$ without common factors, for example, the value of $a_1(\theta_1,N)/N$ will equal $\theta_1$ if and only if N is an integer multiple of $K$.  Otherwise it will be the smallest value of $J/K$ that exceeds $\theta$.\\

Secondly, recall that while a $Complete Beta(\alpha, \beta)$ density function for $\theta$, which is proportional to $\theta^{\alpha-1}(1-\theta)^{\beta-1}$ over $\theta \in [0,1]$, allows only parameters $\alpha > 0$ and $\beta > 0$ for proper integration, the four parameter Incomplete density over $\theta \in (\theta_1,\theta_2)$ strictly within $(0,1)$ integrates naturally when $\alpha = 0$ and $\beta = 0$.  In such a case the proportionality constant for the density equals $\{log[\theta_2/(1-\theta_2)] - log [\theta_1/(1-\theta_1)]\}^{-1}$.  Although this density is zero outside the interval $(\theta_1,\theta_2)$, when it mixes corresponding Binomial distributions as prescribed by exchangeability, the mixture allows positive probabilities for appropriate rational values of the average successes across the entire spectrum of rationals within $[0,1]$.\\

Thirdly, we mention that we have only presented here the limiting result for the Strict or Linear Extensions of $PaN[a_1(\theta_1,N), a_2(\theta_2,N),^.,^.]$ assertions.  We have achieved results for the Quadratic and Weak Extensions as well.  However, their details are more complicated, defying fruitful presentation here.\\

\noindent{\bf Proof Structure:}   Algebraic details of the following claims appear below.  As in the proof of Theorem 5, the probabilities for the possible values of the sum $S_{N+1}$ are partitioned into three groups.  In the case of Strict extensions, the first group turn out to be probability masses for the average variable value of a $Binomial(N,\theta_1)$ distribution.  However, they are summed only over values of the variable that are strictly less than $\theta_1$.  Since the average of a $Binomial(N,\theta_1)$ variable converges almost surely to $\theta_1$ itself, this sum of probabilities over smaller values surely converges to $0$. In the case of Linear extensions the sum of associated probabilities less than $\theta_1$ turns out to be even smaller, with the same consequences.   (A similar argument applies to Group 3 probabilities.)  In contrast, when frequency mimicking conditional probabilities are inserted into the recursive equations for Group 2 probabilities, each 
component probability relative to its bin width is found to converge to the density value of an $Incomplete Beta (\theta_1, \theta_2, 0, 0)$ distribution. \ \ \ \ \ \ \ $\qed $

\subsection*{Algebraic details of the proof of Theorem 6}

To begin, we shall again express the recursive equations
(\ref{eq:recursiveform})  in three groups, but applied now to $N+1$ events:
the group sizes are $a_1$, $a_2 -
a_1 +1$, and $N-a_2+1$. These three numbers sum to $N+2$, the size of the
pmf vector {\bf q}$_{N+2}$.  However we need remember that now  these values of $a_1$ and $a_2$ depend on $N$ and $(\theta_1, \theta_2)$, viz., $a_1 = a_1(\theta_1,N) \ \equiv \ [[N\theta_1$, and $a_2 = a_2(\theta_2,N) \ \equiv \ N\theta_2]]$.  Moreover, $a_1/N \rightarrow \theta_1$ and $a_2/N \rightarrow \theta_2$ as $N \rightarrow \infty$. \\

\noindent {\bf Group 1.}  The first term incorporates the summation
constraint, while the following $a_1-1$ relatively unconstrained
terms can be seen to grow geometrically by factors of order N.
Remember that when the integer $i$ is not within $[a_1, a_2]$ the
values of $p_i$ terms are
constrained only to be nondecreasing. \\

$q_{0,N+1} \ = \ \{1 + \displaystyle\sum_{a
	=0}^{N} 
\binom{N+1}{a+1}
\prod_{i=0}^{a}\
\frac{p_{i,N}}{1-p_{i,N}}\}^{-1} $\ \ , \\

$q_1 \ = \ (N+1)\ \frac{p_0}{1-p_0} \ q_0$\\

$q_2 \ = \ \frac{N}{2}\ \frac{p_1}{1-p_1} \ q_1$\\

$q_3 \ = \ \frac{N-1}{3}\ \frac{p_2}{1-p_2} \ q_2$\\

$^.$.$_.$\\

$q_{a_1-1} \ = \ \frac{N+1-(a_1-2)}{a_1-1}\ \frac{p_{a_1-2}}
{1-p_{a_1-2}} \ q_{a_1-2}$ \ \ \ \ \ \ \ .\\

\noindent {\bf Group 2.}  The first equality for each $q$ in this
group of $a_2 - a_1 + 1$ terms continues this format of
recursive representations.  The second equality
replaces each of the relevant
$p_{a,N}$ assertions with their frequency mimicking  values, $a/N$, and simplifies the algebraic expression so that patterns can be seen:\\

$q_{a_1} \ = \ \frac{N+1-(a_1)}{a_1}\ \frac{p_{a_1-1}}
{1-p_{a_1-1}} \ q_{a_1-1}$\\

$q_{a_1+1} \ = \ \frac{N+1-a_1}{a_1+1}\ \frac{p_{a_1}}{1-p_{a_1}}
\ q_{a_1} \ \ = \ \ \frac{N-a_1+1}{a_1+1}\ \frac{a_1}{N-a_1} \
q_{a_1}$\\

$q_{a_1+2} \ = \ \frac{N+1-(a_1+1)}{a_1+2}\
\frac{p_{a_1+1}}{1-p_{a_1+1}} \ q_{a_1+1} \ \ = \ \
\frac{N-a_1}{a_1+2}\ \frac{a_1+1}{N-(a_1+1)}
\ q_{a_1+1}$\\

$^.$.$_.$\\

$q_{a_2-1} \ = \ \frac{N+1-(a_2-2)}{a_2-1}\
\frac{p_{a_2-2}}{1-p_{a_2-2}} \ q_{a_2-2} \ \ = \ \
\frac{N-a_2+3}{a_2-1}\ \frac{a_2-2}{N-a_2+2}
\ q_{a_2-2}$\\

$q_{a_2} \ = \ \frac{N+1-(a_2-1)}{a_2}\
\frac{p_{a_2-1}}{1-p_{a_2-1}} \ q_{a_2-1} \ \ = \ \
\frac{N-a_2+2}{a_2}\ \frac{a_2-1}{N-a_2+1} \ q_{a_2-1}$ \ \ .\\

\noindent {\bf Group 3.}  Continuing the list of recursive
expressions, though the third group of $N-a_2+1$ terms are again
relatively unrestricted, they eventually
decrease geometrically by factors on the order of N.
(The three lines with a second equality are merely simplifications.)\\

$q_{a_2+1} \ = \ \frac{N+1-(a_2)}{a_2+1}\
\frac{p_{a_2}}{1-p_{a_2}} \ q_{a_2} \ \ = \ \
\frac{N-a_2+1}{a_2+1}\ \frac{p_{a_2}}{1-p_{a_2}}
\ q_{a_2}$\\

$q_{a_2+2} \ = \ \frac{N+1-(a_2+1)}{a_2+2}\
\frac{p_{a_2+1}}{1-p_{a_2+1}} \ q_{a_2+1}
\ \ = \ \
\frac{N-a_2}{a_2+2}\ \frac{p_{a_2+1}}{1-p_{a_2+1}}
\ q_{a_2+1}$\\

$q_{a_2+3} \ = \ \frac{N+1-(a_2+2)}{a_2+3}\
\frac{p_{a_2+2}}{1-p_{a_2+2}} \ q_{a_2+2} \ \ = \ \
\frac{N-a_2-1}{a_2+3}\ \frac{p_{a_2+2}}{1-p_{a_2+2}}
\ q_{a_2+2}$\\

$^.$.$_.$ \ \ until\\

$q_{N} \ = \ \frac{2}{N}\ \frac{p_{N-1}}{1-p_{N-1}}
\ q_{N-1}$ \ ;\\

$q_{N+1} \ = \ \frac{1}{N+1}\ \frac{p_{N}}{1-p_{N}} \ q_{N}$ \ .\\

{\bf We shall now examine the second group of $q_a$ values,} in particular
the {\it sum} of these $q_a$'s.  Having substituted the frequency
mimicking values of $p_i$ with $i/N$, and now substituting the
recursive multiplicands $q_{a_1+i}$ with their expressions in terms of
$q_{a_1}$ using equation (3), the sum of all the constrained
terms in the second group becomes\\

$q_{a_1}\ [\ 1 \ + \ a_1\ (N+1-a_1)\ \{ \frac{1}{(a_1+1)(N-a_1)}
\ + \
\frac{1}{(a_1+2)(N-a_1-1)} \ + \ ^{...} \ + \\
\hspace*{7.5cm}+ \ \frac{1}{(a_2-1)(N+2-a_2)} \ + \ \frac{1}{(a_2)(N+1-a_2)}\; \} \ ]$\\

$= \ q_{a_1}\ [\ a_1\ (N+1-a_1)\ \Sigma_{i = a_1}^{a_2} \ \frac{1}{i\ (N+1-i)} \ ]$\\

$= \ q_{a_1}\  \frac{a_1\ (N+1-a_1)}{N+1}\ [\;\Sigma_{i
	= a_1}^{a_2}\ \frac{1}{i} \ + \
\Sigma_{i = a_1}^{a_2} \frac{1}{N+1-i}\;]$\\

$= \ q_{a_1}\ \frac{a_1\ (N+1-a_1)}{N+1}\
[\;H(a_2) - H(a_1-1) + H(N+1-a_1) - H(N-a_2) \;]$ . (A)\\

\noindent Since the harmonic sum function $H(N)$ is unbounded as
$N$ increases, this coefficient on $q_{a_1}$ is unbounded too.  Because of the similarity of the harmonic sum to the natural logarithm for large $N$, the expression in brackets $[\;]$ becomes not too different from $log(a_2) - log(a_1-1)\;+\; log(N-a_1+1)\;-\;log(N-a_2)$.  Equivalently, for large values of N, this term in brackets ${}$ becomes indistinguishable from\\

$log(\frac{a_2}{a_1-1}) \ + \ log(\frac{N-a_1+1}{N-a_2}) \ = \ log(\frac{a_2/(N+1)}{(a_1-1)/(N+1)}) \ + \ log(\frac{(N-a_1+1)/(N+1)}{(N-a_2)/(N+1)})$ \ \ \ ,\\

and thus it converges to\\

$log(\frac{\theta_2}{\theta_1}) \ + \ log(\frac{1-\theta_1}{1-\theta_2}) \ = \ log[\frac{(1-\theta_1)}{\theta_1}\frac{\theta_2}{(1-\theta_2)}]$ \ \ \ .\\

\noindent Thus, the complete coefficient on $q_{a_1}$ gets close to  $[ (N+1)\theta_1(1-\theta_1)]\;log[\frac{(1-\theta_1)}{\theta_1}\frac{\theta_2}{(1-\theta_2)}]$\;.\\

Well this surely grows with $N$, so $q_{a_1}$ converges to $0$.  However in this case $q_{a_1}$ goes to $0$ in such a way that its product with the bracketed coefficient goes to $1$.  Moreover, the coefficient on $q_{a_1}$ begins to look like the Incomplete Beta function value.  This can be learned by studying the behaviours of the sums of Group 1 and Group 3 probabilities, which we shall do now.

\subsection*{  Limit sum of Group 1 in the ``Strict Case''}

Remember that the ``strict'' completion of the pmf vector {\bf q}$_{N+2}$ derives from the specification of $p_{a,N+1} = a_1(N,\theta)/N$ for each $a = 0, 1, ..., (a_1-1)$.  In this context
the value of each such $p_a/(1-p_a)$ in the Group 1 equations converges to $A_1 \equiv \theta_1/(1-\theta_1)$;  and furthermore, for large values of $N$, the values of the associated $q_{a, N+1}$ become indistinguishable from\\

$q_{0,N+1} \ = \ q_0$\ \ ,\\

$q_1 \ = \ (N+1)\ A_1 \ q_0$\\

$q_2 \ = \ \frac{N}{2}\ A_1 \ q_1 \ = \ ^{N+1}C_{2} \ A_1^{\ 2} \ q_0 $\\

$q_3 \ = \ \frac{N-1}{3}\ A_1 \ q_2 \ = \ ^{N+1}C_{3} \ A_1^{\ 3} \ q_0 $\\

$^.$.$_.$\\

$q_{a_1-1} \ = \ \frac{N+1-(a_1-2)}{a_1-1}\ \frac{p_{a_1-2}}
{1-p_{a_1-2}} \ q_{a_1-2} \ = \ ^{N+1}C_{a_1-1} \ A_1^{\ (a_1-1)} \ q_0 $ \ \ \ \ \ \ \ .\\

\noindent Thus, the sum of terms in Group 1  becomes near to\\

$q_0 \ \displaystyle\sum_{a=0}^{a_1-1} \  ^{N+1}C_a \ A_1^a \ \  = \ \ \frac{q_0}{(1-\theta_1)^{(N+1)}} \ \displaystyle\sum_{a=0}^{a_1-1}  \ \ ^{N+1}C_a \ \theta_1^a \; (1-\theta_1)^{(N+1-a)}\ . $\\

\noindent Since the expression for $q_0$ incorporates the summation constraint on all the $q$'s, this sum of the first ``$a$'' terms depends only on the summation
$\displaystyle\sum_{a=0}^{a_1-1}  \ \ ^{N+1}C_a \ \theta_1^a \; (1-\theta_1)^{(N+1-a)}\ . $\\

Well, this summation represents the sum of probability values for a quantity $S$ that is distributed as $Binomial(N+1, \theta_1)$, viz., $P[S \leq (N+1)\theta_1 - 1] \ \ = \ \ P[\frac{S}{N+1} < \theta_1]$.
To conclude, then, the law of large numbers tells us that this probability converges to $0$, because the proportion $\frac{S}{N+1}$ converges almost surely to $\theta_1$ itself.\\

As to Group 3, the structure of the pmf components can be seen to be  identical to that of Group 1 components, but applied to the negated events $\tilde{E}_a$.  Thus the sum of the Group 3 probabilities also converges to $0$.\\

Since the sum of Group 2 probabilities converges to $1$ while each particular component converges to $0$, the vector of individual $q$'s in Group 2, when divided by $N$ converge to a limiting density function for $\bar{S}_N$  that is identifiable as an $Incomplete\ Beta (\theta_1,\theta_2,0,0)$ density.

\subsection*{  Limit sum of Group 1 in the ``Linear Case''}

The analysis of Group 1 probabilities in the case of  ``Linear'' extensions of conditional probabilities outside the FMD interval $(a_1(\theta_1,N), a_2(\theta_2,N))$ begins just as in the case of ``Strict'' extensions.  However, the summation of probabilities that is assessed there is now no longer a sum of $Binomial(N+1, \theta_1)$ probability masses;  for in this case, each of the expressions for summand probabilities in Group 1 involves a conditional odds ratio $p_i/(1-p_i)$ that is smaller than $\theta_1$.  Remember that the values of $p_i$ in this Group increase linearly from $p_L(N)$ to $a_1(\theta_1,N)/N$.  Thus, in the limit they increase from the limit $p_L(N)$ to $\theta_1$.  Since each of these probabilities in the resulting summation expression is smaller than the Binomial probability masses in the ``Strict'' case, the total summation must be even smaller than in that case.  Thus, it too converges to $0$ as $N$ increases.  As a result, the limiting distribution of the 
$q's$ tends to the same $Incomplete\ Beta$ mixture of Binomials as it does it the ``Strict'' extension.
\ \ $\qed$

\end{document}